\documentclass[11pt]{article} 
 
\usepackage{amssymb}
\usepackage{amsmath}

\usepackage[T1]{fontenc}
\usepackage{charter,eulervm}
\usepackage{setspace}

\usepackage{sectsty}

\usepackage{a4}
\usepackage{amsthm}
\usepackage{tikz}  
\usetikzlibrary{trees,calc}
\usepackage{microtype}
\usepackage{xcolor}
\usepackage{nicefrac}
\usepackage[round]{natbib}
\usepackage{stackengine}

\newcounter{comment}
\newcommand{\comment}{\refstepcounter{comment}\noindent\textbf{Comment \arabic{comment}.}~~}

\newcounter{example}
\newcommand{\example}{\refstepcounter{example}\noindent\textbf{Example \arabic{example}.}~~}

\usepackage{hyperref}

\definecolor{dark-red}{rgb}{0.4,0.15,0.15}
\definecolor{dark-blue}{rgb}{0.15,0.15,0.4}
\definecolor{medium-blue}{rgb}{0,0,0.5}
\hypersetup{
    colorlinks, linkcolor={dark-red},
    citecolor={dark-red}, urlcolor={dark-red}
}

\newtheorem{theorem}{Theorem}

\newtheorem{lemma}{Lemma}

\newtheorem{proposition}{Proposition}

\title{Sequential Equilibria in a Class of Infinite Extensive Form Games\thanks{We are grateful to Rabah Amir, Pierpaolo Battigalli, Philippe Bich, Alessandro Bonatti, Gui Carmona, Laura Doval, János Flesch, Christoph Kuzmics, Rida Laraki, George Mailath, Chantal Marlats, Christina Pawlowitsch,  Jérôme Renault, Phil Reny, Klaus Ritzberger, Larry Samuelson, Tristan Tomala, and Nicolas Vieille for valuable discussion and comments.}}
\author{Michael Greinecker, Martin Meier, Konrad Podczeck\thanks{Greinecker: CEPS, ENS Paris-Saclay,
  \href{mailto:michael.greinecker@ens-paris-saclay.fr}{michael.greinecker@ens-paris-saclay.fr}, Meier: Free University of Bozen-Bolzano, \href{mailto:Martin.Meier@unibz.it}{Martin.Meier@unibz.it}, Podczeck: University of Vienna,
  \href{mailto:konrad.podczeck@univie.ac.at}{konrad.podczeck@univie.ac.at}.
  }}

\begin{document} 
\maketitle
\begin{abstract}
Sequential equilibrium is one of the most fundamental refinements of Nash equilibrium for games in extensive form. However, it is not defined for extensive-form games in which a player can choose among a continuum of actions. We define a class of infinite extensive form games in which information behaves continuously as a function of past actions and define a natural notion of sequential equilibrium for this class. Sequential equilibria exist in this class and refine Nash equilibria. In standard finite extensive-form games, our definition selects the same strategy profiles as the traditional notion of sequential equilibrium.
\end{abstract}

\section{Introduction}

Sequential equilibrium is the leading formulation of the idea that an equilibrium should satisfy backward induction, that a player's strategy ought to prescribe optimal play from each conceivable contingency onward. Sequential equilibria were originally defined by \citet{KrepsWilson1982} in terms of limits of completely mixed strategies, behavior strategies under which each action available at an information set is played with positive probability. If uncountably many actions are available at an information set, it is impossible to assign positive probability to all of them; sequential equilibrium never gets off the ground. But many, if not most, applications of game theory in economics require players to choose from a continuum of actions, be it prices, quantities, or any other choice that is naturally modeled as being continuous. Sequential equilibrium cannot be applied here, and researchers are forced to make up ad-hoc refinements that are not supported by any general theory.

We show in this paper that sequential equilibrium admits a natural formulation in a class of infinite extensive form games in which information is suitably continuous. Without such informational continuity assumptions, otherwise seemingly well-behaved games need not even admit a Nash equilibrium. But if we make such an informational continuity assumption, we also get conceptual leverage. Sequential equilibrium derives its power by making off-equilibrium behavior relevant; everything strategically relevant that could happen is considered. Continuity guarantees that everything strategically relevant can be represented by a dense subset. 

Our approach builds on recent work of \citet{MyersonReny2020}. Their starting point is a characterization of sequential equilibrium (behavior) strategy profiles in finite games of perfect recall in which all chance moves have full support: Such a strategy profile is a sequential equilibrium strategy profile if and only if it is the limit of a sequence of profiles of completely mixed strategies in which every player $\epsilon$-optimizes conditional on reaching each information set as $\epsilon$ goes to zero; \citet[Theorem 6.4]{MyersonReny2020}. To adapt their characterization to games with a continuum of actions, Myerson and Reny replace completely mixed strategies with generalized sequences, nets, of strategies that eventually assign positive probability to every action. Their resulting solution concept is given in terms of finitely additive distributions over outcomes as $\epsilon$ goes to zero. These distributions need not be induced by any actual strategy profile. In the class of games that Myerson and Reny study, this is unavoidable. We look at a class of games with nicer continuity properties and obtain better behaved solutions. 

Extensive form games with uncountable action spaces (in discrete time) are haunted by a myriad of pathologies even when payoff and action spaces satisfy standard regularity conditions: Finite-horizon games with observable actions need not have subgame perfect equilibria, \citet*{HarrisRenyRobson1995}, signaling games need not have sequentially rational equilibria, \citet{vanDamme1987} and \citet{Manelli1996}, and even finite-horizon dynamic programming problems with imperfect information need not have solutions, \citet*{HarrisStinchcombeZame2000}.  In our view, these pathologies can be traced to a single source: informational discontinuities.\footnote{Consequently, a continuum of terminal action choices poses no problems\textemdash a fact exploited by \citet{BattigalliTebaldi2019}.} Information in seemingly continuous games can vary discontinuously. In the following example, an informational discontinuity prevents the very existence of any Nash equilibrium, let alone any refinement thereof. Moreover, the space of feasible distributions over plays is not closed in the topology of weak convergence of measures. Here is a very simple example, a souped-up version of matching pennies:\bigskip

\example First, Ann chooses a number $a_1\in [0,1]$ while Bob simultaneously chooses $b\in\{0,1\}$. Afterwards, Ann observes the product $a_1\cdot b$ and chooses $a_2\in\{0,1\}$. Ann's payoff is given by $u_A(a_1,b,a_2)=-a_1-|a_2-b|$ and Bob's payoff by $u_B(a_1,b,a_2)=|a_2-b|$. By choosing $a_1>0$, Ann can guarantee a payoff of  $-a_1$ by matching Bob's action afterwards. However, there can be no equilibrium in which Ann chooses a positive $a_1$ with positive probability since Ann would get a higher payoff by choosing a smaller yet still positive action in the first period. However, if Ann chooses $0$ for sure, Bob can guarantee himself an expected payoff of $1/2$ by randomizing uniformly, which would lead to Ann getting an expected payoff of $-1/2$. Ann could do better by choosing some $a_1<1/2$ initially and then matching Bob for a payoff of $-a_1$. Therefore, no Nash equilibrium exists. 

To see that the space of feasible distributions over plays is not closed, consider a sequence of strategy profiles in which Bob always randomizes uniformly and in which Ann plays first $1/n$ in the $n$th strategy profile, and in which Ann then always mimics Bob's action. For every $n$, the choices $b$ and $a_2$ will be the same and uniformly distributed, and this will still hold in the limit distribution. But in the limit distribution, Ann chooses $0$ in the first period, and this means that she will lack the information to correlate her behavior with his in the limit.
\bigskip

The following example of a signaling game, essentially due to \citet{vanDamme1987} and \citet{Manelli1996}, shows a more subtle informational discontinuity in which information is only used as a correlation device.\bigskip

\example First, nature chooses Ann's type $t$ from $\{-1,1\}$ with equal probability. Then, Ann learns her type and chooses $a\in[-1,1]$. Bob learns $a$ and chooses $b\in\{-1,1\}$. Ann's payoff is given by $u_A(t,a,b)=-|t-b+a|$ and Bob's payoff by $u_B(t,a,b)=a\cdot b$. Bob does not care about Ann's type directly. This game has no sequentially rational Nash equilibrium.\footnote{However, \citet{vanDamme1987} shows that there exists a Nash equilibrium in which Ann never plays an action close to $0$ and for which Bob would react suboptimally to such actions.} To see this, first note that for $a\neq 0$, Bob has a unique best response that matches the sign of $a$, while Bob is indifferent between all of his actions if $a=0$. Suppose there were a sequentially rational equilibrium. If $t=1$, Ann prefers every positive action to every negative action, and her payoff is strictly decreasing in the positive action she chooses towards $0$. The only possible way to play optimally would be to play $a=0$ and to receive a payoff of $0$. By a similar argument, type $-1$ of Ann has to play $a=0$ and receives a payoff of $0$. Since Bob's action is independent of Ann's type, her expected payoff will be strictly negative---a contradiction. 

It is also easy to see that the space of feasible distributions over plays is not closed. Just note that Ann can receive a payoff arbitrarily close to $0$ and her payoff function is continuous. However, she cannot receive a payoff of $0$ under any feasible distributions over plays. 
\bigskip

The phenomenon that the space of distributions over plays need not be closed comes from an effect that Myerson and Reny call strategic entanglement: Future players may use small but informative observable variation as a coordination device. Their behavior is, consequently, correlated. If the original variation goes to zero, the correlation is preserved in the limit without any correlation device inducing it. One encounters this problem  already in the, arguably, simplest class of extensive form games with a continuum of strategies, in games of incomplete information; \citet[Example 2]{MilgromWeber1985}. Games of incomplete information with a continuum of types and even finitely many actions may not have any equilibrium or even an approximate equilibrium; see \citet{Simon2003} and \citet{Hellman2014}. \citet{MilgromWeber1985} introduced an informational continuity assumption for games of incomplete information, which they called absolutely continuous information, and showed that it guarantees existence. \citet{Balder1988} kept the assumption of absolutely continuous information but generalized everything else using the mathematical machinery of Young measures. The absolute continuity assumption of \citet{MilgromWeber1985} makes it possible to analyze a game of incomplete information with correlated types as a game of incomplete information with independent types. The trick is to incorporate density functions for type distributions into the payoff functions. Similarly, \citet{Witsenhausen1988} introduced a parallel assumption in a model of distributed control to reduce a dynamic model in which new information arrives over time to a static model. More recently, \citet{FabbriMoroni2024} introduced a similar dynamic version of the absolute continuity of information assumption and showed that it guarantees the existence of perfect and sequential equilibria in stochastic games with infinitely many types and signals in which a player has only countably many actions available at each information set.  We adapt a version of their informational continuity assumption from a previous version of their paper to allow for a continuum of available actions. As a consequence, we can define a notion of sequential equilibrium that is induced by actual measurable strategies. Distributions over plays are, therefore, countably additive. Countable additivity is not just a technical convenience; finitely additive probability measures are extremely nonconstructive objects and a solution based on them is not explicitly usable. No explicit example of a finitely additive probability measure on a measurable space that is not already countably additive can be constructed.\footnote{If one weakens the axiom of choice to the axiom of dependent choice, a weak form of the axiom of choice that suffices for essentially all real analysis in, say, separable Banach spaces, it is relatively consistent that there are no finitely additive probability measures defined on a $\sigma$-algebra that are not already countably additive; see \citet[Chapter 15]{TomkowiczWagon2016} for the details, in particular Theorem 15.4 and Theorem 15.5.}

To define convergence of strategies, we work with a certain quotient of the space of behavioral strategies that is compact and metrizable, the space of so-called strategic measures---a concept adapted from the literature on nonstationary stochastic dynamic programming which also generalizes the distributional strategies of \citet{MilgromWeber1985}. Strategic measures contain less information than the actual strategies inducing them, and this requires some delicacy in representing continuation behavior.

In general, there need not exist any set of strategy profiles such that every measurable set of private histories that might occur with positive probability must be reached with positive probability, even under our continuity assumption. However,  every measurable set of private histories that has open action-sections and that can be reached with positive probability must be reached with positive probability under a profile of full support strategies (all nonempty open sets of actions are played with positive probability). Since expected payoffs vary continuously in actions in our class of games, such sets allow for meaningful tests of sequential $\epsilon$-rationality. We call these sets ``strategically relevant.'' We then define a sequential equilibrium to be the limit of a convergent sequence of strategy profiles (actually, their induced strategic measures), in which every strategically relevant set is reached with positive probability and such that conditional behavior at each strategically relevant set is eventually $\epsilon$-optimal for each $\epsilon>0$. 
\bigskip

Here is a brief tour of the remainder of the paper: Section \ref{environment} introduces the class of games we study. Section \ref{stratmeas} establishes geometric and topological properties of strategic measures. Section \ref{main} formulates the remaining ingredients for the definition of a sequential equilibrium and the result that the set of sequential equilibria is compact. Section \ref{duop} illustrates our notion of sequential equilibrium by applying it to a sequential duopoly propblem with noisy observability. Section \ref{discuss} discusses some of our assumptions and modeling choices in more detail.
Section \ref{proofs} collects most proofs, except for a few central proofs that we keep in the main text.

\section{The Environment}\label{environment}

Time is discrete with an infinite horizon: $0,1,2\ldots$ There is a nonempty countable (and potentially infinite) set $N$ of \emph{players}, including a player $0$, whom we interpret as nature. All other players are \emph{proper} players. For each period $\tau$, there is a single player $N(\tau)$ active at $\tau$. We assume that $N(\tau)=0$ if and only if $\tau=0$.  We assume without loss of generality that for every player $i$ there is some period $\tau$ such that $N(\tau)=i$.  For every period $\tau$, there is a Polish (separable and completely metrizable) \emph{action space} $A_\tau$. Nature's initial moves are given by $\nu\in\Delta(A_0)$.  Here, we write $\Delta(X)$ for the space of Borel probability measures on a Polish space $X$, endowed with the topology of weak convergence of measures. Unless otherwise mentioned, measurability is understood to be with respect to Borel $\sigma$-algebras. For every nonzero period $\tau$, there is a Polish \emph{signal space} $S_\tau$. For notational convenience, it will be convenient to treat nature's action space as a signal space and write $A_0=S_0$. For every nonzero period $\tau$, we let $H_\tau=\prod_{t=0}^{\tau-1} S_t\times \prod_{t=1}^{\tau-1}A_t$ be the space of \emph{histories} until $\tau$. For nonzero periods $\tau$, there is a measurable \emph{signal function} $\sigma_\tau:H_\tau\to\Delta(S_\tau)$. For every nonzero period $\tau$, we let $H_\tau^p$ be the space of \emph{private histories} of player $N(\tau)$ given by
 \[H_\tau^p=\prod_{\substack{t<\tau \\ N(t)=N(\tau)}} S_t\times\prod_{\substack{t<\tau \\ N(t)=N(\tau)}} A_t.\] For every nonzero period $\tau$, there is a measurable \emph{action correspondence} with nonempty values $\mathcal{A}_\tau:H_\tau^p\times S_\tau\to 2^{A_\tau}$. Here, $H_\tau^p\times S_\tau$ corresponds to an information set at which the player active at $\tau$ moves. We let $P=\prod_{t=0}^\infty S_t\times\prod_{t=1}^\infty A_t$ be the space of (conceivable) \emph{plays}.\footnote{Not every play is feasible given the constraints imposed by the action correspondences.}  Finally, we specify for every proper player $i$ a bounded measurable \emph{payoff function} $v_i:P\to\mathbb{R}$. Note that we allow signals to be payoff relevant, and not all uncertainty that is not due to players randomizing is incorporated in the initial move of nature.\bigskip

 The game proceeds as follows: In period $0$, nature chooses an action $a_0$ in $A_0$ according to $\nu$. Then, player $N(1)$ receives a signal $s_1\in S_1$ distributed according to $\sigma_1(a_0)$ and chooses an action $a_1\in \mathcal{A}_1(s_1)$. Then, player $N(2)$ receives a signal $s_2\in S_2$ distributed according to $\sigma_2(a_0,a_1,s_1)$. If $N(2)$ is a new player, $N(2)\neq N(1)$, then $N(2)$ chooses an action $a_2\in \mathcal{A}_2(s_2)$. If $N(2)=N(1)$, then  $a_2\in \mathcal{A}_2(a_1,s_1,s_2)$. We continue this way. An active player can make their action choices not just dependent on the current signal but also on their complete private history. In particular, we do not interpret signals as information sets containing all available information; signals represent the arrival of new information. That players can condition their choices on their whole private histories guarantees that they have effectively perfect recall. This formulation with signals makes it easier to formulate our informational assumption below, and separates the rules of the game from a player's ability to remember the past.\footnote{\citet{BattigalliGeneroso2024} argue this point forcefully.} That action correspondences depend only on private histories and current signals ensures that the availability of actions provides a player with no additional information.
\bigskip 

\noindent We make the following assumptions on the games we study: Every action correspondence has compact values and is jointly continuous, that is, both upper and lower hemicontinuous, in all action coordinates. All payoff functions are jointly continuous in all action coordinates. Note that we formally require payoff functions to be defined on plays that are not feasible for the given action correspondences. The behavior of payoff functions on infeasible plays is irrelevant to our analysis, and one can always extend payoff functions that are only defined on feasible plays by a suitable extension theorem to all conceivable plays; see \citet{Yushkevich1997}.\bigskip

Central to everything in this paper is the following assumption on the signal functions: For every nonzero period $\tau$, there is a probability measure $\mu_\tau\in\Delta(S_\tau)$ and a measurable function $g_\tau:H_\tau\times S_\tau\to\mathbb{R}$ jointly continuous in all action coordinates in $H_\tau$ such that for all $h\in H_\tau$ and every measurable set $E\subseteq S_\tau$, one has
\[\sigma_\tau(h,E)=\int_E g_\tau(h,s)~\mathrm d\mu_\tau.\] Moreover, we assume that there is a nonnegative measurable function $f_\tau:S_\tau\to\mathbb{R}$ with finite $\mu_\tau$-integral such that $g_\tau(h,s)\leq f_\tau(s)$ for all $h\in H_\tau^p$ and $s\in S_\tau$. In words, signal functions are given by an integrably bounded family of conditional densities that varies continuously in actions. This assumption extends the ``noisy observability'' assumption of a previous version of \citet{FabbriMoroni2024} to games with a continuum of available actions at some moves. The assumption excludes games of perfect information or observable actions when players can choose from a continuum of actions.\footnote{For each perfectly observed action, a different point of positive mass of $\mu_\tau$ is required. Additivity implies that $\mu_\tau$ can have at most countably many such mass points.} For notational convenience, we let $\mu_0=\nu$.\bigskip

A \emph{strategy} $\beta^i$ for a proper player $i$ specifies for each period $\tau$ at which $i$ is active a measurable function $\beta_\tau:H_\tau^p\times S_\tau\to\Delta(A_\tau)$ such that $\beta(h,s)\big(\mathcal{A}_\tau(h,s)\big)=1$ for all $h\in H_\tau^p$ and $s\in S_\tau$. A strategy is \emph{pure}
if all values are point masses (Dirac measures.) Pure strategies can be identified with measurable selections from action correspondences. They exist by the Kuratowski--Ryll-Nardzewski selection theorem; \citet[Theorem 18.13]{AliprantisBorder2003}. A \emph{strategy profile} specifies a strategy for each player. Each strategy profile induces a canonical distribution on $P$ in the obvious way by the Ionescu-Tulcea theorem; see \citet[Theorem 8.24]{Kallenberg2021}.  

\section{Strategic Measures}\label{stratmeas}

Instead of working with strategies directly, it will be convenient to work with somewhat better-behaved derived objects that still allow us to describe the behavior of players while having the geometric and topological structure for us to carry out a fixed-point argument.

Let $\beta_i=(\beta_\tau)_{P(\tau)=i}$ be a strategy of player $i$. Each strategy $\beta_i$ induces recursively a \emph{strategic measure} $\beta_i^*$, a Borel probability measure on $\prod_{N(\tau)=i}S_\tau\times A_\tau$ in which the signal in period $\tau$ is chosen independently from previous period's signals and actions according to $\mu_\tau$, and in which the action in period $\tau$ is chosen according to $\beta_\tau$ and the marginal distribution on $H_\tau^p\times S_\tau$. The condition that signals are distributed independently of their past guarantees, by symmetry, that action choices cannot depend on yet unobserved signals. Profiles of strategic measures can be combined with the densities on signal functions to obtain the actual distribution on plays induced by the strategies but do not directly give the probabilities with which signals and actions are played.\bigskip

\example We illustrate the idea of strategic measures with a simple parametric family of finite extensive form games with our friends Ann and Bob. Ann moves first, and chooses $L$ or $R$. Then, Bob receives a noisy signal with the value $l$ or $r$. If Ann has chosen $L$, Bob receives the signal $l$ with a probability of $c$. If Ann has chosen $R$, Bob receives the signal $r$ with a probability of $c$. We let $w=1-c$. Here $c$ and $w$, represent the probability of a correct or wrong, respectively, signal.  Afterwards, Bob chooses $L$ or $R$ himself. If we take the special case that $c=1$ and, therefore, $w=0$, the signal collapses and we have effectively a game of perfect information, but this is only one extreme case. As to the payoffs, Bob wants to match Ann but Ann wants Bob to mismatch her; the payoff difference is larger when Ann plays $L$. Because both players only move once and signals are payoff-irrelevant, one can identify signals with information sets. The corresponding extensive form looks as follows:\bigskip

\begin{tikzpicture}
        [
        every path/.style={thick},
        round node/.style={fill=black,circle,draw,inner sep=1.5},
        level distance=15mm,
        level 1/.style={sibling distance=55mm},
        level 2/.style={sibling distance=30mm},
        level 3/.style={sibling distance=20mm},
        ]
        
        \node[round node] (level1) {}
        child {node[round node] (level2) {}
            child {node[round node] (level3) {}
                child {node[round node, label=below:{\stackanchor{$2$}{$4$}}] {} edge from parent node[above left] {$L$}}
                child {node[round node, label=below:{\stackanchor{$4$}{$2$}}] {} edge from parent node[above right] {$R$}}
                edge from parent node[above left] {$[c]l$}
            }
            child {node[round node] (level32) {}
                child {node[round node, label=below:{\stackanchor{$2$}{$4$}}] {} edge from parent node[above left] {$L$}}
                child {node[round node, label=below:{\stackanchor{$4$}{$2$}}] {} edge from parent node[above right] {$R$}}
                edge from parent node[above right] {$[w]r$}
            }
            edge from parent node[above left] {$L$}
        }
        child {node[round node] {}
            child {node[round node] (level33) {}
                child {node[round node, label=below:{\stackanchor{$1$}{$2$}}] {} edge from parent node[above left] {$L$}}
                child {node[round node, label=below:{\stackanchor{$2$}{$1$}}] {} edge from parent node[above right] {$R$}}
                edge from parent node[above left] {$[w]l$}
            }
            child {node[round node] (level34) {}
                child {node[round node, label=below:{\stackanchor{$1$}{$2$}}] {} edge from parent node[above left] {$L$}}
                child {node[round node, label=below:{\stackanchor{$2$}{$1$}}] (lowright) {} edge from parent node[above right] {$R$}}
                edge from parent node[above right] {$[c]r$}
            }
            edge from parent node[above right] {$R$}
        };
        
        \coordinate (legendX) at ($(lowright)+(.9,0)$);
        \node at (legendX|-level1) {\small Ann};
        \node at (legendX|-level2) {};
        \node at (legendX|-level3) {\small Bob};
        
        \path[draw, dashed, bend left=9, gray] (level3) to (level33);
        \path[draw, dashed, bend right=9, gray] (level32) to (level34);
    \end{tikzpicture}\bigskip

\noindent For Ann, a strategic measure is simply a probability distribution over her actions; her behavior need not condition on any information. For Bob, we have to take some probability distribution on signals such that real signals come with a positive density. Every probability distribution that assigns positive probability to both signals works; we choose $\mu$ so that both $l$ and $r$ are chosen uniformly with probability $1/2$ each. To make this work with the system of conditional densities $g$, we need that $g(L,l)\cdot 1/2=g(R,r)\cdot 1/2=c$ and $g(L,r)\cdot 1/2=g(R,l)\cdot 1/2=w$. Therefore, $g(L,l)=g(R,r)=2c$ and $g(L,r)=g(R,l)=2w$. A strategic measure for Bob is just a probability distribution on $\{l,r\}\times\{L,R\}$ with uniform $\{l,r\}$-marginal. A behavior strategy of Bob is a system of conditional probabilities for each signal, and the corresponding strategic measure must have these as the actual conditional probabilities. For example, if $m$ is a strategic measure of Bob, the corresponding behavior strategy plays $L$ with probability $m(l,L)/\big(m(l,L)+m(l,R)\big)= 2 m(l,L)$ after Bob sees the signal $l$. Let $m_A$ be a strategic measure of Ann, which is simply a probability distribution over $\{L,R\}$. For Ann, there is no difference between a mixed strategy, a behavior strategy, and a strategic measure. Let $m_B$ be a strategic measure of Bob. Let $(a_1,s_2,a_2)$ be a play. The probability  that this play occurs under behavior strategies giving rise to the strategic measures $m_A$ and $m_B$, respectively, is $m_A(a_1)\cdot g(a_1,s_2)\cdot m_B(s_2,a_2)$. To see that this is the correct distribution, observe that the true probability of the signal $s_2$ occurring after action $a_1$ is $g(a_1,s_2)\cdot 1/2$ and the conditional probability of Bob playing $a_2$ after $s_2$ is $2\cdot m_B(s_2,a_2)$. Note that the distribution over plays is multilinear in strategic measures. 

One can transform the game into an equivalent game without informative signals but with different payoff functions. In this transformed game, signals are uninformative and arrive with probability $\mu$, independently of what Ann does. The payoffs are the original payoffs, multiplied by the density,.  which makes signals payoff relevant. This gives rise to the following extensive form game:
\bigskip

\begin{tikzpicture}
        [
        every path/.style={thick},
        round node/.style={fill=black,circle,draw,inner sep=1.5},
        level distance=15mm,
        level 1/.style={sibling distance=58mm},
        level 2/.style={sibling distance=30mm},
        level 3/.style={sibling distance=20mm},
        ]
        
        \node[round node] (level1) {}
        child {node[round node] (level2) {}
            child {node[round node] (level3) {}
                child {node[round node, label=below:{\stackanchor{$4c$}{$8c$}}] {} edge from parent node[above left] {$L$}}
                child {node[round node, label=below:{\stackanchor{$8c$}{$4c$}}] {} edge from parent node[above right] {$R$}}
                edge from parent node[above left] {$[1/2]l$}
            }
            child {node[round node] (level32) {}
                child {node[round node, label=below:{\stackanchor{$4w$}{$8w$}}] {} edge from parent node[above left] {$L$}}
                child {node[round node, label=below:{\stackanchor{$8w$}{$4w$}}] {} edge from parent node[above right] {$R$}}
                edge from parent node[above right] {$[1/2]r$}
            }
            edge from parent node[above left] {$L$}
        }
        child {node[round node] {}
            child {node[round node] (level33) {}
                child {node[round node, label=below:{\stackanchor{$2w$}{$4w$}}] {} edge from parent node[above left] {$L$}}
                child {node[round node, label=below:{\stackanchor{$4w$}{$2w$}}] {} edge from parent node[above right] {$R$}}
                edge from parent node[above left] {$[1/2]l$}
            }
            child {node[round node] (level34) {}
                child {node[round node, label=below:{\stackanchor{$2c$}{$4c$}}] {} edge from parent node[above left] {$L$}}
                child {node[round node, label=below:{\stackanchor{$4c$}{$2c$}}] (lowright) {} edge from parent node[above right] {$R$}}
                edge from parent node[above right] {$[1/2]r$}
            }
            edge from parent node[above right] {$R$}
        };
        
        \coordinate (legendX) at ($(lowright)+(.9,0)$);
        \node at (legendX|-level1) {\small Ann};
        \node at (legendX|-level2) {};
        \node at (legendX|-level3) {\small Bob};
        
        \path[draw, dashed, bend left=9, gray] (level3) to (level33);
        \path[draw, dashed, bend right=9, gray] (level32) to (level34);
    \end{tikzpicture}\bigskip

In this game, Ann and Bob act independently and can not condition their behavior at each other. For this reason, we get a game in normal form in which players choose strategic measures. Since distributions over plays are multlinear in strategic measures, so are expected payoffs. 
As we will see, all these properties generalize.\bigskip

\comment While strategic measures contain enough information in our setting to formulate notions such as Nash equilibrium, some information gets lost when going from strategies to strategic measures. For one, two strategies of player $i$ that agree $\mu_\tau$-almost surely for every period $\tau$ at which $i$ is active must necessarily induce the same strategic measure. More importantly, strategic measures do not specify what a player does after playing actions that will never be played under this strategic measure. \citet{Kuhn2003} observed in 1952 that normal strategies contain more information than is needed to describe a player's contingent behavior in any situation the player may find themselves in when following the strategy. A strategic measure is closer to what \citet{Rubinstein1991} called a ``plan of action'' than to an ordinary strategy. Importantly, a strategic measure will not specify behavior at every information set, and this makes it harder to formulate an appropriate notion of sequential rationality. It is also worth pointing out that strategic measures are closer to mixed strategies than to behavior strategies. Indeed, it follows from \citet[Theorem 5.2]{Feinberg1996} that every strategic measure is a mixture of strategic measures induced by pure strategies.\footnote{Together with Lemma \ref{lchtvs} below, the fact that the weak topology on a Polish space is metrizable, and with Choquet's theorem (see \citet[Chapter 3]{Phelps2001}), Feinberg's theorem shows that the extreme points of the set of strategic measures of a player are exactly the strategic measures induced by pure strategies.} Also, as Lemma \ref{expected} below shows, expected payoffs are linear in a player's strategic measures, a property that would not hold for behavior strategies.\bigskip

\example We illustrate the issues arising when a player moves more than once in a simple, finite dynamic decision problem with incomplete information. The state of nature is either fine or bad; both cases happen with equal probability. The decision-maker can first decide whether to learn the state or remain uninformed. Afterwards, the decision-maker invests or does not invest. For the phenomena we illustrate, a specification of payoffs is not needed. The decision tree looks as follows:

\bigskip

\begin{tikzpicture}
        [
        every path/.style={thick},
        round node/.style={fill=black,circle,draw,inner sep=1.5},
        level distance=15mm,
        level 1/.style={sibling distance=40mm},
        level 2/.style={sibling distance=20mm},
        level 3/.style={sibling distance=15mm},
        ]
        
        \node[round node] (level1) {}
        child {node[round node] (level2) {}
            child {node[round node] (level3) {}
                child {node[round node] {} edge from parent node[above left] {$I$}}
                child {node[round node] {} edge from parent node[above right] {$N$}}
                edge from parent node[above left] {$L$}
            }
            child {node[round node] (level32) {}
                child {node[round node] {} edge from parent node[above left] {$I$}}
                child {node[round node] {} edge from parent node[above right] {$N$}}
                edge from parent node[above right] {$U$}
            }
            edge from parent node[above left] {$[1/2]f$}
        }
        child {node[round node] (level22) {}
            child {node[round node] (level33) {}
                child {node[round node] {} edge from parent node[above left] {$I$}}
                child {node[round node] {} edge from parent node[above right] {$N$}}
                edge from parent node[above left] {$L$}
            }
            child {node[round node] (level34) {}
                child {node[round node] {} edge from parent node[above left] {$I$}}
                child {node[round node] (lowright) {} edge from parent node[above right] {$N$}}
                edge from parent node[above right] {$U$}
            }
            edge from parent node[above right] {$[1/2]b$}
        };
%
%
        \path[draw, dashed, bend left=9, gray] (level2) to (level22);
        \path[draw, dashed, bend left=9, gray] (level32) to (level34);
    \end{tikzpicture}\bigskip

\noindent Note that when the decision-maker moves a second time, they are either uninformed of the state, know it is fine, or know it is bad. We, therefore, need three signals, $\{u,f,b\}$. As the underlying probability measure, we choose the uniform distribution assigning probability $1/3$ to each of these signals. A strategic measure of the decision-maker is then a probability distribution over $\{L,U\}\times\{u,f,b\}\times\{I,N\}$ with the property that the $\{u,f,b\}$-marginal is uniform and independent of the $\{L,U\}$-marginal. For the density, we have $g(f,U,u)=g(b,U,u)=g(f,L,f)=g(b,L,b)=3$, and all other values are $0$. If $m_0$ is it the strategic measure of nature in the initial period and $m$ the strategic measure of the decision maker, the probability of the play $(a_0,a_1,s_2,a_2)$ is, given by $m_0(a_0)m(a_1,s_2,a_3)g(a_0,a_1,s_2)$. Then, to take an example, the conditional probability of not investing of a decision maker after playing $U$ and receiving the signal $f$ is $m(U,f,N)/\big(m(U,f,N) + m(U,f,I)\big)$. This probability is only defined if the decision maker actually plays $U$ with positive probability, while a behavior strategy specifies such a conditional probability even when the probability of playing $U$ is $0$. This is the kind of information one loses when going from behavior strategies to strategic measures.\bigskip

Strategic measures have a nice geometric and topological structure. If action correspondences are constant, the following lemma is a special case of a result of \citet{Nowak1988}. Similar to \citet{Balder1989}, we show that our space of strategic measures can be viewed as a closed subspace of a space that is compact by the result of \citet{Nowak1988}. Again, $\Delta(X)$ denotes the space of Borel probability measures on a Polish space $X$. Recall that the \emph{weak topology} on $\Delta(X)$ is the topology generated by mappings of the form $\mu\mapsto\int f~\mathrm d\mu$ with $f:X\to\mathbb{R}$ bounded and continuous.\footnote{We have no need to refer to weak topologies on normed spaces in this paper; no confusion should arise.}

\begin{lemma}\label{compconv}The set of strategic measures induced by the strategies of a player is a nonempty, convex, and compact set in the weak topology.
\end{lemma}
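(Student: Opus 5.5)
Write $T_i=\{\tau: N(\tau)=i\}$ and $X_i=\prod_{\tau\in T_i}S_\tau\times A_\tau$; this product of countably many Polish spaces is Polish. I will characterize the set $M_i$ of strategic measures intrinsically, as a set of measures on $X_i$ cut out by closed, affine conditions. Convexity and closedness then come essentially for free, and compactness reduces to tightness.

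\textbf{Characterization.} A measure $m\in\Delta(X_i)$ is a strategic measure if and only if, for every $\tau\in T_i$:
\begin{enumerate}
\item[(a)] the marginal of $m$ on $H^p_\tau\times S_\tau$ equals the product of its marginal on $H^p_\tau$ with $\mu_\tau$;
\item[(b)] the marginal of $m$ on $H^p_\tau\times S_\tau\times A_\tau$ is concentrated on $\operatorname{Gr}\mathcal{A}_\tau$.
\end{enumerate}
For necessity, both conditions are immediate from the recursive (Ionescu-Tulcea) construction of $\beta_i^*$. For sufficiency, I disintegrate the marginal of $m$ on $H^p_\tau\times S_\tau\times A_\tau$ with respect to its marginal on $H^p_\tau\times S_\tau$. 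This gives a measurable kernel $\beta_\tau$, and by (b) it satisfies $\beta_\tau(h,s)(\mathcal{A}_\tau(h,s))=1$ outside a null set. On that null set I redefine $\beta_\tau$ as a Dirac measure at a measurable selection, which exists by Kuratowski--Ryll-Nardzewski. By (a) and induction over $\tau\in T_i$, the measure generated by the kernels $\mu_\tau$ and $\beta_\tau$ agrees with $m$ on every finite-dimensional marginal, hence equals $m$. Nonemptiness follows from the existence of pure strategies.

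\textbf{Convexity and closedness.} Condition (a) is linear in $m$. Condition (b) reads $m(\cdot)(\operatorname{Gr}\mathcal{A}_\tau)=1$, which is also linear. So $M_i$ is convex; this route avoids any need to mix behavior strategies directly, which would not be linear. For closedness, marginal maps are weakly continuous and the product map $p\mapsto p\otimes\mu_\tau$ is continuous, so (a) defines a closed set. For (b), I use that $\mathcal{A}_\tau$ is upper hemicontinuous with compact values in the action coordinates. This should make the graph closed in the action coordinates, though only measurable in the signal coordinates, and that is the delicate point. There I will invoke the standard fact, as in Nowak and Balder, that on the set of measures whose marginal on the signal coordinates is the fixed product measure $\bigotimes\mu_t$, weak convergence coincides with the $ws$-topology. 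Equivalently, $\int f\,dm$ is continuous for Carathéodory integrands $f$, measurable in signals and continuous in actions. The Portmanteau theorem for such sets then shows that (b) is preserved under limits. Concretely, I will take $f$ to be a bounded truncation of the distance from the action coordinate to $\mathcal{A}_\tau(h,s)$. This $f$ is Carathéodory by the continuity of $\mathcal{A}_\tau$ and the compactness of its values, and (b) is equivalent to $\int f\,dm=0$.

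\textbf{Compactness.} This reduces to tightness, by Prokhorov. The signal marginals are the fixed measures $\mu_\tau$, each tight. The actions are not bounded a priori, because $\mathcal{A}_\tau$ has compact values but need not be globally bounded. For each $\tau$ and $\varepsilon>0$, I therefore choose a compact set $K\subseteq H^p_\tau\times S_\tau$ carrying mass $1-\varepsilon$ uniformly over $M_i$ (by induction, using tightness of the earlier coordinates and of $\mu_\tau$). Upper hemicontinuity with compact values gives that $\mathcal{A}_\tau(K)$ is compact, and this image carries the $\tau$-th action with probability at least $1-\varepsilon$. A diagonal choice of $\varepsilon_\tau=\varepsilon 2^{-\tau}$ gives tightness on the countable product, so $M_i$ is relatively compact; being closed, it is compact.

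\textbf{Main obstacle.} The main obstacle is the closedness of (b), and likewise the inductive tightness step, because $\mathcal{A}_\tau$ is continuous only in action coordinates and not in signals. I will handle it through Lusin's theorem: restrict to compact sets of signals, carrying most of the $\mu$-mass, on which the relevant maps are continuous. This is exactly where the fixed signal marginals are used.
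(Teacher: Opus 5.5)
Your argument is correct, but it takes a different route from the paper's.

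\textbf{The paper's route.} It embeds each $A_\tau$ in a compact metrizable $A_\tau^*$ and quotes Nowak's theorem for compactness of the unconstrained strategic measures on $\prod_\tau S_\tau\times A_\tau^*$. It reads convexity off the linear constraints in Nowak's proof. It then imposes feasibility through compact sets $C_m$ built with the Scorza--Dragoni theorem: a measure $\lambda$ is feasible iff $\lambda(C_m)\geq 1-1/m$ for every $m$, and each such condition is closed by the Portmanteau theorem. A trace-topology argument brings everything back to $\prod_\tau S_\tau\times A_\tau$.

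\textbf{Your route.} You stay in $\Delta(X_i)$ throughout, in three steps.
\begin{itemize}
\item Conditions (a) and (b), with the disintegration-and-selection argument for sufficiency, make explicit the linear description that the paper leaves implicit.
\item Closedness of (b) comes from the bounded Carath\'eodory integrand $\min\{1,d(a,\mathcal{A}_\tau(h,s))\}$ together with the fixed-marginal continuity result the paper proves as Lemma~\ref{Young}. That lemma applies because (a) for all $\tau$ pins the signal marginal to $\bigotimes_\tau\mu_\tau$.
\item Compactness comes from Prokhorov.
\end{itemize}
In effect you reprove Nowak's theorem with the constraints built in. The Scorza--Dragoni sets, which the paper uses for closedness, are used in your tightness step instead.

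\textbf{The tightness step.} As you anticipate, the sentence ``$\mathcal{A}_\tau(K)$ is compact'' is not true for an arbitrary compact $K$, since $\mathcal{A}_\tau$ is continuous only in actions. It becomes true once $K$ is replaced by $K\cap\bigl(K'\times\prod_{t<\tau,\,t\in T_i}A_t\bigr)$. Here $K'$ is a compact set of signals on which $\mathcal{A}_\tau$ is jointly continuous, viewing $\mathcal{A}_\tau$ as a Carath\'eodory map into the compact subsets of $A_\tau$ with the Hausdorff metric. This costs mass at most $1-\bigotimes_{t\le\tau,\,t\in T_i}\mu_t(K')$, uniformly over $M_i$, precisely because all elements of $M_i$ share that signal marginal.

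\textbf{Comparison.} Your route is more self-contained and makes convexity transparent. The paper's is shorter given the citation, and it gets tightness for free from the compactified action spaces.
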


\begin{lemma}\label{lchtvs}The set of strategic measures induced by the strategies of a player when endowed with the weak topology is homeomorphic to a nonempty compact convex subset of a locally convex Hausdorff topological vector space.
\end{lemma}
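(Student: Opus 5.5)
The plan is to embed the set of strategic measures into the space $\mathcal{M}(X)$ of finite signed Borel measures on the Polish space $X=\prod_{N(\tau)=i}S_\tau\times A_\tau$. We endow $\mathcal{M}(X)$ with the weak topology generated by the linear functionals $\mu\mapsto\int f~\mathrm d\mu$, where $f$ ranges over the bounded continuous functions $C_b(X)$. The steps are as follows.

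First, I would check that $\mathcal{M}(X)$ with this topology is a locally convex topological vector space. The topology is generated by the family of seminorms $p_f(\mu)=\left|\int f~\mathrm d\mu\right|$, and such a topology is locally convex by construction: addition and scalar multiplication are continuous because each functional $\mu\mapsto\int f~\mathrm d\mu$ is linear.

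Second, I would show that the topology is Hausdorff. Equivalently, $C_b(X)$ must separate points of $\mathcal{M}(X)$: if $\int f~\mathrm d\mu=0$ for all $f\in C_b(X)$, then $\mu=0$.
\begin{itemize}
\item Write the Jordan decomposition $\mu=\mu^+-\mu^-$. The hypothesis says $\int f~\mathrm d\mu^+=\int f~\mathrm d\mu^-$ for every $f\in C_b(X)$.
\item Both $\mu^+$ and $\mu^-$ are finite Borel measures on a metric space. For a closed set $F$, the functions $f_n(x)=\max\{0,1-n\,d(x,F)\}$ are bounded and continuous and decrease to the indicator $1_F$. Dominated convergence then gives $\mu^+(F)=\mu^-(F)$ for every closed $F$.
\item Closed sets form a $\pi$-system generating the Borel $\sigma$-algebra, and $\mu^+(X)=\mu^-(X)$ (take $F=X$). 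By Dynkin's $\pi$--$\lambda$ theorem, $\mu^+=\mu^-$, so $\mu=0$.
\end{itemize}
This is the step I would be most careful about, but it is standard.

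Third, I would identify the topologies. The inclusion $\Delta(X)\hookrightarrow\mathcal{M}(X)$ is a homeomorphism onto its image, where $\Delta(X)$ carries the topology of weak convergence of measures. This holds because the topology of weak convergence on $\Delta(X)$ is, by definition, the topology generated by exactly the same maps $\mu\mapsto\int f~\mathrm d\mu$ with $f\in C_b(X)$, and the relative topology of an initial topology is the initial topology of the restricted maps. In particular, the set of strategic measures of player $i$ inherits the same topology from either space. The embedding is also affine, so convex combinations in $\Delta(X)$ are convex combinations in $\mathcal{M}(X)$.

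Finally, by Lemma \ref{compconv}, the set of strategic measures is nonempty, convex, and compact in the weak topology. Its image under the inclusion is therefore a nonempty compact convex subset of the locally convex Hausdorff space $\mathcal{M}(X)$. No further argument is needed, since compactness is preserved by the homeomorphism onto the image and convexity by the affine embedding.
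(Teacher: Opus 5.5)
Your proof is correct, but it takes a more direct route than the paper. The paper first embeds $X=\prod_{N(\tau)=i}S_\tau\times A_\tau$ as a $G_\delta$ subset of a compact metrizable space $K$. It then identifies $\Delta(X)$ with the measures in $\Delta(K)$ concentrated on $X$, using the fact that the weak topology on $\Delta(X)$ is the trace of the weak topology on $\Delta(K)$. Finally, it invokes the Riesz representation theorem and Alaoglu to place $\Delta(K)$ inside $C(K)^*$ with its weak$^*$ topology, where local convexity and the Hausdorff property are textbook facts. You instead work with the dual pair $\big(\mathcal{M}(X),C_b(X)\big)$ directly on $X$. There the homeomorphism onto the image is tautological, because both topologies are initial topologies for the same functionals. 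Your only substantive step is the separation argument for the Hausdorff property, which you carry out correctly with the functions $\max\{0,1-n\,d(x,F)\}$ and the $\pi$--$\lambda$ theorem. Your route avoids the compactification, the Riesz/Alaoglu machinery, and the trace-topology identification. That identification genuinely needs an argument, since only restrictions of functions in $C(K)$, a proper subclass of $C_b(X)$, are available on $X$. The paper's route gets the Hausdorff property for free in a standard Banach dual, with the separation content hidden in the uniqueness part of Riesz representation. It also reuses the compactification device already employed in the proof of Lemma~\ref{compconv}. In both arguments, compactness of the set of strategic measures ultimately comes from Lemma~\ref{compconv}.
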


We write $M_i\subseteq\Delta\big(\prod_{N(\tau)=i}S_\tau\times A_\tau\big)$ for the space of strategic measures induced by strategies of proper player $i$, and let $M_0$ be the trivial space containing only the single measure $\nu$. A generic element of $M_i$ is usually denoted by $m_i$.  Before we are able to show, at least, that Nash equilibria exist by a fixed-point argument in the space of profiles of strategic measures, we need to establish the continuity of expected payoffs. It is easier to do this when the time horizon is finite, something we can actually assume without loss of generality. The following lemma shows that our assumptions guarantee that a stochastic version of the continuity at infinity assumption of \citet{FudenbergLevine1983} holds.

\begin{lemma}\label{continf} For each player $i\in N$ and each $\epsilon>0$, there exists a period $T$ and a payoff function $v_i^T$ depending only on the first $T$ periods such that the expected payoff under $v_i$ and $v_i^T$ differs by at most $\epsilon$ for each feasible distribution over plays. 
\end{lemma}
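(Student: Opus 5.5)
The plan is to derive the uniform statement from a pointwise one by compactness. By Lemma \ref{compconv}, each $M_j$ is compact and metrizable in the weak topology, so the product $\prod_j M_j$ of profiles of strategic measures is compact and metrizable as well. Every feasible distribution over plays is induced by such a profile. Recall how: one integrates against $\bigotimes_j m_j$, weighted by the product of the densities $g_t$, as in the Ann--Bob example. So it suffices to find truncated payoffs $v_i^T$ for which the error
\[\Phi_T(m)=\Bigl|\int v_i~\mathrm dP_m-\int v_i^T~\mathrm dP_m\Bigr|\]
is continuous in $m$, decreases in $T$, and tends to $0$ for each fixed profile $m$. Dini's theorem on the compact space $\prod_j M_j$ then gives uniform convergence, which is the statement of the lemma.

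\textbf{Step 1: the truncation.} Use the Kuratowski--Ryll-Nardzewski theorem to fix a pure strategy $\bar\beta$, that is, a measurable selection from every action correspondence. Let $v_i^T(p)$ be the expected value of $v_i$ when the first $T$ coordinates of $p$ are kept and the continuation is generated by $\bar\beta$, with the signals after $T$ drawn from $\sigma_t$. This $v_i^T$ depends only on the first $T$ periods. It is measurable because the signal densities are measurable and $\bar\beta$ is measurable. It is continuous in the first $T$ action coordinates: here one uses that $g_t$ is jointly continuous in actions and dominated by the integrable $f_t$, so dominated convergence applies.

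\textbf{Step 2: the key estimate.} The main task is to compare $v_i$ with $v_i^T$ on plays that agree through period $T$ but may differ afterwards. Since $v_i$ is bounded and jointly continuous in all action coordinates, and feasible actions lie in compact sets (compact values and continuity of $\mathcal{A}_t$), the oscillation of $v_i$ over tails should be controlled uniformly in the late action coordinates. The argument would use the product topology, in which neighbourhoods constrain only finitely many coordinates, together with compactness of the feasible action sets.

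Late \emph{signals} are not topologized, however, and this is where I expect the real obstacle. Here I would use the absolute continuity assumption. Under every feasible distribution, the law of $s_t$ is dominated by $f_t\,\mathrm d\mu_t$. Uniform integrability of $f_t$ should then give a uniform tightness and Lusin-type control: one finds compact sets of signals carrying mass at least $1-\delta$ under \emph{all} profiles simultaneously, and on these sets $v_i$ behaves continuously. If the dependence of $v_i$ on infinitely many late signals cannot be handled this way, I would fall back on a different truncation. In that case I would define $v_i^T$ as the conditional expectation of $v_i$ given the first $T$ periods, computed under the transformed reference measure $\bigotimes_t\mu_t$ with density weights. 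I would then get pointwise convergence from the martingale convergence theorem, and continuity in $m$ from the multilinear, density-weighted representation of expected payoffs together with Lemma \ref{compconv}.
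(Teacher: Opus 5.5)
Your Dini scheme rests on inputs you do not have.

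\begin{itemize}
\item Continuity of $m\mapsto\int v_i\,\mathrm dP_m$ is Lemma~\ref{expected}, which the paper derives \emph{from} the present lemma. Lemma~\ref{compconv} gives compactness, not continuity of payoffs. Also, representing $P_m$ as $\bigotimes_j m_j$ weighted by $\prod_t g_t$ only works up to a finite horizon; on infinite plays $P_m$ can be singular with respect to $\bigotimes_j m_j$.
\item Nothing makes $\Phi_T$ monotone in $T$, so Dini's theorem would not apply anyway.
\item Step~1's continuity claim fails on two counts. First, $\bar\beta$ is merely a measurable selection (the correspondences depend on past actions), so the continuation can jump with $a_{\le T}$. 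Second, even for a history-independent $\bar\beta$, dominated convergence handles finitely many densities $g_t$, while the continuation involves infinitely many signals.
\item The martingale fallback has the same defect. $E_Q[v_i\mid\mathcal F_T]\to v_i$ (conditioning on the first $T$ periods) holds only $Q$-almost surely, and no single $Q$ dominates all feasible laws on infinite plays.
\end{itemize}

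What would actually carry the argument is your Step~2, and it is exactly the paper's key step: a compact $C\subseteq P$ on which $v_i$ is continuous and which has probability at least $1-\epsilon/2$ under \emph{every} feasible distribution. Given such a $C$, the paper embeds $P$ in a compact product, extends $v_i|_C$ by Tietze, and approximates uniformly by a continuous function of finitely many coordinates via Stone--Weierstrass. No Dini argument and no continuity in $m$ is needed.

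However, Step~2 cannot be derived from the bounds $f_t$ when $v_i$ depends measurably on infinitely many signals. The paper's appeal to iterating Scorza--Dragoni ``as in Lemma~\ref{compconv}'' does not cover this case either. Uniform tightness is fine: use $\prod_t K_t$ with $\int_{S_t\setminus K_t}f_t\,\mathrm d\mu_t<\delta2^{-t}$. But Lusin control is uniform only when the feasible signal laws are uniformly absolutely continuous with respect to one measure. That holds on $\prod_{t\le T}S_t$, where densities are bounded by $f_1\cdots f_T$, but fails on the infinite product.

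Here is an example. Let nature and $S_1$ be trivial. One player picks $a\in[0,\tfrac12]$ in period~1, and all later action sets are singletons. For $\tau\ge2$ let $S_\tau=\{0,1\}$, $\mu_\tau$ uniform, and $g_\tau(h,s)=1+2a(2s-1)\le2$. Then $s_2,s_3,\ldots$ are i.i.d.\ with $\Pr(s_\tau=1)=\tfrac12+a$. Let $\bar s=\limsup_n\frac1n\sum_{\tau=2}^{n+1}s_\tau$ and $v=\bar s\,\mathbf{1}\{\bar s<\tfrac34\}$, which is bounded, Borel, and constant in actions.

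\begin{itemize}
\item The expected payoff of $a$ is $(\tfrac12+a)\mathbf{1}\{a<\tfrac14\}$, so there is not even an optimal strategy.
\item Any $v^T$ that depends on the first $T$ periods and is continuous in actions has expected payoff continuous in $a$. So the lemma, in the form needed for Lemma~\ref{expected}, fails for $\epsilon<\tfrac38$.
\item For every $T\ge1$, your Step~1 truncation is exactly this discontinuous function of $a$.
\item Applying Portmanteau to the closed set $C\cap\{v\ge\tfrac12\}$ as $a\uparrow\tfrac14$ shows that no $C$ as in Step~2 exists for $\epsilon<2$.
\end{itemize}

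An additional hypothesis is therefore needed, for instance that $v_i$ depends on only finitely many signals $s_1,\ldots,s_{T_0}$. Under it, Scorza--Dragoni with respect to the finite measure $f_1\cdots f_{T_0}\,\mathrm d\bigotimes_{t\le T_0}\mu_t$ gives uniformity over all feasible distributions, by absolute continuity of the integral. Tietze plus Stone--Weierstrass then finish the proof without Dini.
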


Using Lemma \ref{continf}, we are able to establish that expected payoffs are jointly continuous and multilinear in the space of strategic measures. Together with Lemma \ref{lchtvs} and the Kakutani–Fan–Glicksberg fixed-point theorem,  \citet[Corollary 17.55]{AliprantisBorder2003}, this guarantees at least the existence of a Nash equilibrium in our environment.

\begin{lemma}\label{expected}The expected payoff of a player under two strategy profiles that induce the same profile of strategic measures is the same. The expected payoff is then a continuous multilinear function on the space of profiles of strategic measures.
\end{lemma}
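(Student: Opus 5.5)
The plan is to write the distribution over plays explicitly as a density against the product of the strategic measures, and read off all three claims from that formula. Fix a profile of strategies $(\beta_i)_{i\in N}$ with induced strategic measures $(m_i)_{i\in N}$. Let $Q$ be the product measure $\bigotimes_{i\in N} m_i$ on $\prod_{i}\prod_{N(\tau)=i}S_\tau\times A_\tau$, which after rearranging coordinates is a measure on $P$. Under $Q$ every signal $s_\tau$ is drawn from $\mu_\tau$ independently of everything before it, and each action $a_\tau$ is drawn from $\beta_\tau$ given the private history of $N(\tau)$.

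The first step is to show that for each finite $T$ the marginal on the first $T$ periods of the true distribution $P_\beta$ has density
\[ D_T(p)=\prod_{\tau=1}^{T} g_\tau\big(h_\tau(p),s_\tau\big) \]
with respect to the corresponding marginal of $Q$. I will prove this by induction on $T$ via Ionescu-Tulcea. The kernels generating the two measures coincide at every action stage, because both use $\beta_\tau$ evaluated at the private history. At a signal stage, $P_\beta$ uses $\sigma_\tau(h,\cdot)=g_\tau(h,\cdot)\,\mu_\tau$ while $Q$ uses $\mu_\tau$. Hence the Radon--Nikodym derivatives multiply, and since $D_T$ does not involve the $\beta$'s, the first claim follows for payoffs depending only on the first $T$ periods:
\[ \int v\,\mathrm dP_\beta=\int v\,D_T\,\mathrm d\bigotimes_i m_i. \]
For general $v_i$, I apply Lemma~\ref{continf}: the expected payoffs of $v_i$ and $v_i^T$ differ by at most $\epsilon$ uniformly over feasible distributions. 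So two profiles with the same strategic measures have payoffs within $2\epsilon$ of each other for every $\epsilon>0$, and are therefore equal. This also defines the expected payoff $U_i$ as a function of $(m_j)_j$.

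Multilinearity is immediate from the formula, since $\int F\,\mathrm d\bigotimes_j m_j$ is affine in each $m_j$ separately by Fubini. The uniform approximation of Lemma~\ref{continf} preserves multilinearity in the limit. Continuity also reduces, via the same uniform approximation, to continuity of
\[ (m_j)_j\mapsto\int v_i^T D_T\,\mathrm d\bigotimes_j m_j. \]
For finite $T$ this involves only the finitely many players active before $T$, and the product map $(m_j)\mapsto\bigotimes_j m_j$ is weakly continuous.

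The main obstacle is that the integrand $F=v_i^T D_T$ is continuous only in the action coordinates. It is merely measurable in the signals, and it is unbounded, being only dominated by $\prod_\tau f_\tau(s_\tau)$. The key structural fact to use is that every $m_j$ has its signal marginals fixed and equal to $\prod\mu_\tau$. The first move is the standard theorem on weak convergence with fixed marginals (Schäl's ws-topology, as used by Balder and Nowak): when measures on $X\times Y$ have a common $X$-marginal and converge weakly, integrals of Carathéodory integrands that are measurable in $x$, continuous in $y$, and dominated by an integrable function of $x$ converge. Here $x$ is the signal vector and $y$ the action vector. The product of the $m_j$ again has a fixed signal marginal $\bigotimes_\tau\mu_\tau$, and weak convergence of products follows from weak convergence of the factors. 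The integrable bound is $\sup|v_i|\prod_{\tau\le T}f_\tau(s_\tau)$, which is integrable under the product of the $\mu_\tau$.

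To justify this step I will truncate $D_T$ at level $K$, which costs at most $\int_{\{\prod f_\tau>K\}}\prod f_\tau\,\mathrm d\bigotimes\mu_\tau$ uniformly over profiles. Then, by Lusin's theorem, I approximate the bounded Carathéodory function by one that is jointly continuous off a signal set of small $\mu$-measure. Because the signal marginal is fixed, the error from this approximation is also uniform across profiles. Combining this with the uniform approximation of Lemma~\ref{continf} gives joint continuity of the expected payoff.
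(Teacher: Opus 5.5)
Your proposal is correct and follows essentially the paper's proof: reduce to a finite horizon via Lemma~\ref{continf}, write the expected payoff as $\int v\,\prod_{\tau\le T} g_\tau\,\mathrm d\bigotimes_i m_i$, get multilinearity from Fubini/Tonelli, and get continuity from the fixed-signal-marginal convergence result that the paper isolates as Lemma~\ref{Young} (truncation, then Scorza Dragoni, then Tietze). The one point to fix is that the approximation step you attribute to Lusin's theorem should instead invoke the Scorza Dragoni theorem (Lemma~\ref{ScorzaDragoni}), since the action spaces need not be compact.
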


The idea of Lemma \ref{expected} is that one can identify suitable independent product measures obtained from strategic measures as distributions on plays in games in which signals do not depend on histories. To get the correct expected payoffs, one multiplies the payoff function from the original game with the corresponding densities.

In order to go beyond Nash equilibrium, we need to define conditional behavior and a substitute for continuation strategies in the setting of strategic measures. Let $m_i$ be a strategic measure of proper player $i\in N$ and let $N(\tau)=i$. A \emph{$\tau$-continuation} of $m_i$ is a strategic measure $m_i'$ whose $\prod_{N(t)=i,t<\tau}S_t\times A_t$-marginal coincides with the corresponding marginal of $m_i$. Since the marginal mapping is continuous and linear, the set of such $\tau$-continuations is compact and convex, too, by Lemma \ref{compconv}:

\begin{lemma}\label{cont}Let $m_i$ be a strategic measure of proper player $i\in N$ and let $N(\tau)=i$. Then the set of $\tau$-continuations of $m_i$ is nonempty, compact, and convex.
\end{lemma}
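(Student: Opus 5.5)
The set of $\tau$-continuations of $m_i$ is the preimage of a single point under the marginal map, intersected with $M_i$. So I will derive all three properties from Lemma \ref{compconv} together with two facts about the marginal map. Write $X_{<\tau}=\prod_{N(t)=i,t<\tau}S_t\times A_t$. Let $\pi:\Delta\big(\prod_{N(t)=i}S_t\times A_t\big)\to\Delta(X_{<\tau})$ send a measure to its $X_{<\tau}$-marginal. The set in question is
\[C=M_i\cap\pi^{-1}\big(\pi(m_i)\big).\]

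\emph{Nonempty.} Since $m_i\in M_i$ and $\pi(m_i)=\pi(m_i)$, we have $m_i\in C$.

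\emph{Convex.} The map $\pi$ is affine on probability measures: the marginal of $\lambda m+(1-\lambda)m'$ is $\lambda\pi(m)+(1-\lambda)\pi(m')$. Hence $\pi^{-1}(\pi(m_i))$ is convex. By Lemma \ref{compconv}, $M_i$ is convex, and the intersection of two convex sets is convex.

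\emph{Compact.} First, $\pi$ is continuous for the weak topologies. If $f$ is bounded and continuous on $X_{<\tau}$, then $f\circ\mathrm{proj}$ is bounded and continuous on the full product, and $\int f\,\mathrm d\pi(m)=\int f\circ\mathrm{proj}\,\mathrm dm$. So each map generating the weak topology on $\Delta(X_{<\tau})$, composed with $\pi$, is continuous. Second, $\Delta(X_{<\tau})$ is Hausdorff; indeed $X_{<\tau}$ is a countable product of Polish spaces, hence Polish, so $\Delta(X_{<\tau})$ is even metrizable. Singletons are therefore closed, and $\pi^{-1}(\pi(m_i))$ is closed. Thus $C$ is a closed subset of the compact set $M_i$ given by Lemma \ref{compconv}, and so it is compact.

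The only point requiring care is the continuity of $\pi$ on an infinite product. This is handled by the observation that composition with a continuous projection preserves bounded continuous test functions; the rest is immediate from Lemma \ref{compconv}.
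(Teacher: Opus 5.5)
Your proof is correct and takes the same approach as the paper. The paper also derives the lemma from Lemma \ref{compconv} by noting that the marginal map is continuous and linear, so the $\tau$-continuations form a closed convex subset of $M_i$, which is nonempty because it contains $m_i$. You simply make explicit the continuity of the marginal map and the closedness of its fiber, both of which the paper leaves implicit.
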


\comment An issue with strategic measures is that they are defined in terms of the measures $\mu_\tau$. Formally, these are not part of the data of the model; the signal functions are. The assumptions on signal functions guarantee that such measures exist but do not single out any canonical measure. It turns out that the choice is immaterial. One can show that there exist such measures with the property that they assign probability zero to events that cannot have positive probability in any possible distribution over plays; this follows from Lemma \ref{fullsupp} and Lemma \ref{fullsuppinfcompl} below by taking suitable marginals. For any such measures, the corresponding spaces of strategic measures are linearly homeomorphic, and the induced topology on equivalence classes of actual strategies is the same.

\section{Sequential Equilibrium}\label{main}

To motivate our definition of a sequential equilibrium, we revisit the characterization \citet{MyersonReny2020} give of sequential equilibrium strategy profiles in finite games of perfect recall in which all chance moves have full support.\footnote{ For other characterizations of sequential equilibria in such finite games that do not refer to belief systems, see \citet[Proposition 6]{KrepsWilson1982} and \citet[Proposition B]{blume1994algebraic}, and, most similar, \citet[]{halpern2009nonstandard}. \citet[Proposition 2.1]{dilme2024sequentially} characterizes the feasible distributions on terminal nodes induced by sequential equilibria.} It turns out that completely mixed strategies play only a minor role in the characterization of \citet{MyersonReny2020}; one can replace completely mixed strategy profiles by strategy profiles in which every information set is reached with positive probability in the characterization (the result is not restricted to multi-stage games):  

\begin{proposition}\label{characterization}
A strategy profile in a finite game with perfect recall in which every move of nature has strictly positive probability is a sequential equilibrium strategy profile if and only if it is the limit of a sequence of profiles of strategies under which every information set occurs with positive probability and in which every player $\epsilon$-optimizes conditional on reaching each information set as $\epsilon$ goes to zero.
\end{proposition}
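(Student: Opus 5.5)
We prove the two directions separately and lean on \citet[Theorem 6.4]{MyersonReny2020}. That result gives the same characterization with \emph{completely mixed} strategy profiles in place of profiles that merely reach every information set with positive probability. Here ``$\epsilon$-optimizes conditional on reaching an information set'' means that, conditional on the information set, no alternative continuation strategy improves the player's expected payoff by more than $\epsilon$.

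\emph{Only if.} A completely mixed profile reaches every information set with positive probability, because every move of nature has positive probability. So any sequence witnessing the Myerson--Reny characterization is also a sequence of the kind required here, and this direction is immediate.

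\emph{If.} Let $\beta^n\to\beta$ with every information set reached under $\beta^n$ and with each player $\epsilon_n$-optimizing conditional on each information set, where $\epsilon_n\to 0$. The plan is to perturb $\beta^n$ into a completely mixed profile $\hat\beta^n=(1-\delta_n)\beta^n+\delta_n\gamma$, mixing at each information set with the uniform behavior strategy $\gamma$. We choose $\delta_n\to0$ so fast that two things hold.
\begin{enumerate}
\item Conditional beliefs at every information set move by at most $\eta_n\to0$.
\item Conditional continuation payoffs move by at most $\eta_n$.
\end{enumerate}
This is possible because the game is finite. Under $\beta^n$ each information set $I$ has a positive probability $p^n(I)$. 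Conditional probabilities of nodes in $I$ and of terminal nodes given $I$ are ratios of polynomials in the behavior probabilities, and the denominators are bounded below by $p^n(I)>0$. Hence these ratios are continuous at $\beta^n$, and taking $\delta_n$ small relative to $\min_I p^n(I)$ gives the closeness.

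Next we check that $\hat\beta^n$ is approximately optimal. Under the perturbed profile, the player's own continuation at $I$ is within $\delta_n$ of the original one in each behavior probability. Opponents' play is also perturbed slightly, and the beliefs at $I$ are close. The supremum over alternative continuation strategies is Lipschitz in these finitely many quantities, with a constant depending only on the game. Therefore $\hat\beta^n$ $\epsilon_n'$-optimizes conditional on every information set, with $\epsilon_n'=\epsilon_n+C\eta_n+C\delta_n\to0$. Moreover $\hat\beta^n\to\beta$, since $\delta_n\to0$. Myerson--Reny's Theorem 6.4 then yields that $\beta$ is a sequential equilibrium strategy profile. Perfect recall is needed here so that conditioning on an information set is well defined from the player's own perspective and continuation strategies are well defined.

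\emph{Main obstacle.} The hard step is the uniform continuity estimate for conditional optimality at information sets that $\beta^n$ reaches only with tiny probability. We handle it by choosing $\delta_n$ \emph{after} $\beta^n$, small relative to $\min_I p^n(I)$. An alternative avoids perturbation altogether: pass to a subsequence along which the conditional beliefs $\mu^n(\cdot\mid I)$ converge to some $\mu$. Then $(\beta,\mu)$ is consistent, since consistency only needs beliefs to arise as limits of Bayes-consistent beliefs of full-support profiles, and this is checked via the same perturbation. Sequential rationality of $(\beta,\mu)$ follows by passing to the limit in the $\epsilon_n$-optimality inequalities, because they are continuous in beliefs and behavior.
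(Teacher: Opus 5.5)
Your proposal is correct and takes the same route as the paper. The ``only if'' direction comes directly from Myerson and Reny's Theorem 6.4 (completely mixed profiles reach every information set), and the ``if'' direction perturbs each $\beta^n$ into a completely mixed profile, choosing $\delta_n$ after $\beta^n$ so that conditional beliefs and payoffs barely move, and then applies Theorem 6.4 again; the paper states this perturbation step in one line, and you spell it out.
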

\begin{proof}
That a sequential equilibrium strategy profile satisfies the condition follows directly from \citet[Theorem 6.4]{MyersonReny2020}. For the other direction, take a strategy profile that is the limit of a sequence of profiles of strategies under which every information set occurs with positive probability and in which every player $\epsilon$-optimizes conditional on reaching each information set as $\epsilon$ goes to zero. Now, conditional expected payoffs are continuous in strategy profiles at information sets that occur with strictly positive probability. Consequently, one can perturb each term in the sequence so that all actions are played with positive probability, such that the limit is unchanged, and such that every player $\epsilon$-optimizes conditional on reaching each information set as $\epsilon$ goes to zero (possibly, slower). The result follows then, again, from \citet[Theorem 6.4]{MyersonReny2020}. 
\end{proof}
 
Completely mixed strategies play a more essential role for the notion of a perfect equilibrium of \citet{Selten1975} or the notion of a quasi-perfect equilibrium of \citet{vanDamme1984}, in which  they ensure (a form of) admissibility. Even well-behaved infinite games with simultaneous moves need not have admissible equilibria.\footnote{For explicit examples, see Example 2.1 of \citet{SimonStinchcombe1995} or Example 3 of \citet*{Mendez-Naya1995}.} 

Still, since information sets (which coincide with private histories in our environment) include actions, it is in general not possible to find a single strategy profile under which every measurable family of information sets that could be reached with positive probability is reached with positive probability. However, it is not necessary to test $\epsilon$-optimality at all such families. Since we have enough continuity in actions, we can test conditional $\epsilon$-optimality on events that include slack in actions. Let $i\in N$ be a proper player active at $\tau$. We call a measurable set (of information sets) $F\subseteq H_\tau^p\times S_\tau$ \emph{strategically relevant} at $\tau$ for $i$ if $F$ is reached with positive probability under some strategy profile and if for each $(s,a)\in F$, there exists an open neighborhood $U$ of $a$ such that $(s,a')\in F$ for all $a'\in U$. Equivalently, $F\subseteq H_\tau^p\times S_\tau$ is strategically relevant at $\tau$ for $i$ if $F$ is reached with positive probability under some strategy profile and if for each $s$ the $s$-section of $F$ given by $\{a\mid (s,a)\in F\}$ is open.

 A measurable set that is strategically relevant at some period for some player is simply \emph{strategically relevant}; we can identify every such set with the set of compatible plays. A \emph{full support strategy} of a player is a strategy that always chooses a full support distribution over available actions, a distribution under which every nonempty open set has positive probability. Such full support strategies do always exist. A strategic measure induced by a full support strategy is a \emph{full support strategic measure}.

\begin{lemma}\label{fullsupp}Every player has a full support strategy.
\end{lemma}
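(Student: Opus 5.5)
The plan is to build, for each period $\tau$ with $N(\tau)=i$, a measurable map $\beta_\tau:H_\tau^p\times S_\tau\to\Delta(A_\tau)$ such that $\beta_\tau(x)$ is supported in $\mathcal{A}_\tau(x)$ and gives positive mass to every relatively open nonempty subset of $\mathcal{A}_\tau(x)$. Periods are handled separately, so I fix $\tau$ and write $X=H_\tau^p\times S_\tau$ and $\Phi=\mathcal{A}_\tau$. Then $\Phi$ is a measurable correspondence from $X$ into the Polish space $A_\tau$ with nonempty compact values. The natural tool is a Castaing representation. Since $\Phi$ is measurable with nonempty closed values, the Kuratowski--Ryll-Nardzewski selection theorem, in its Castaing form (\citet[Corollary 18.14]{AliprantisBorder2003}), gives a sequence of measurable selections $f_n:X\to A_\tau$ with $\Phi(x)=\overline{\{f_n(x):n\in\mathbb{N}\}}$ for every $x$.

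Given such a representation, I define
\[\beta_\tau(x)=\sum_{n=1}^\infty 2^{-n}\delta_{f_n(x)}.\]
Next, I check measurability. A map into $\Delta(A_\tau)$ is Borel measurable iff $x\mapsto\beta_\tau(x)(B)$ is measurable for every Borel $B$. For each $B$, we have $\beta_\tau(x)(B)=\sum_n 2^{-n}\mathbf{1}_B(f_n(x))$, which is a countable sum of measurable functions. The support condition holds because each $f_n(x)\in\Phi(x)$, so $\beta_\tau(x)(\Phi(x))=1$. For full support, take a relatively open nonempty $U\subseteq\Phi(x)$. By density, $U$ contains some $f_n(x)$, so $\beta_\tau(x)(U)\geq 2^{-n}>0$. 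Collecting the $\beta_\tau$ over the periods in which $i$ is active gives a full support strategy. Nature's behavior is fixed and plays no role here.

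The main obstacle is verifying that the measurability notion in the paper's assumption on $\mathcal{A}_\tau$ is the one the Castaing theorem requires, namely weak measurability: $\{x:\Phi(x)\cap G\neq\emptyset\}$ is Borel for each open $G$. For compact-valued correspondences into a metrizable space, this is equivalent to Borel measurability of $x\mapsto\Phi(x)$ as a map into the space of nonempty compact sets with the Hausdorff metric topology. I would cite \citet[Theorem 18.10]{AliprantisBorder2003} for this equivalence, so either reading of ``measurable'' suffices. If only measurability of the graph were assumed, I would instead fall back on the measurable projection theorem, which gives universal measurability, and then modify on null sets. That route is messier, so I would first confirm that the paper's assumption yields weak measurability. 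The continuity assumptions on $\mathcal{A}_\tau$ are not needed for this lemma.
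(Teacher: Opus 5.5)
Your proposal is correct and is essentially the paper's own proof: take a Castaing representation $\langle f_n\rangle$ of $\mathcal{A}_\tau$ by measurable selections and set $\beta_\tau(h,s)=\sum_{n}2^{-n}\delta_{f_n(h,s)}$. You also spell out three things the paper leaves implicit: the measurability of $\beta_\tau$, the support and full-support checks, and which notion of measurability of $\mathcal{A}_\tau$ the Castaing theorem requires.
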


\begin{lemma}\label{fullsuppinfcompl}Under a profile of full support strategic measures, each strategically relevant set is reached with positive probability. 
\end{lemma}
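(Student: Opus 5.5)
The plan is to show that under a profile of full support strategies, the probability of reaching a strategically relevant set $F\subseteq H_\tau^p\times S_\tau$ is positive whenever it is positive under some profile. Fix such an $F$ at period $\tau$ for player $i$, and fix a strategy profile $\beta'$ under which $F$ is reached with positive probability. Let $\beta$ be a profile of full support strategies; by Lemma \ref{expected} only the induced strategic measures matter. The first step is to move from the actual distribution on plays to the "independent" representation used in Lemma \ref{expected}. There, signals are drawn from the fixed measures $\mu_t$, actions from the strategic measures, and the true probability of $F$ is
\[
\int 1_F \prod_{t<\tau\text{ or }t=\tau} g_t \,\mathrm d\Bigl(\nu\otimes\textstyle\bigotimes_j m_j\Bigr)
\]
restricted to the first $\tau$ periods (with the signal at $\tau$ included). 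Thus $F$ has positive probability under a profile if and only if the set $F^*=\{x: x\in F,\ \prod_{t\le\tau} g_t(x)>0\}$ of truncated plays has positive measure under the product of the strategic measures.

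The second step exploits the slack. Since $\beta'$ gives $F^*$ positive mass, we can pick a truncated play $x$ that lies in $F$ and has positive density product; by the openness of sections we can do better. Working inductively over periods $t=1,\dots,\tau$, we replace the actions in $x$ one at a time. For each $t$ we use three facts: continuity of $g$ in actions, continuity of $\mathcal{A}_t$ (lower hemicontinuity), and the openness of the action sections of $F$. Together these give, for a set of signal-action histories of positive measure, an open set of actions at period $t$ that keeps everything inside $F^*$. A full support strategy assigns positive probability to every nonempty open subset of $\mathcal{A}_t(h,s)$. So the full support strategic measure assigns positive conditional probability to these open sets. A Fubini/induction argument, integrating out period by period starting from the initial move of nature and the $\mu_t$-distributed signals (which are common to all profiles), then yields positive mass for $F^*$ under $\beta$.

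The delicate point, and the main obstacle, is making the inductive step measurable and uniform. The openness is only in action coordinates, for each fixed signal history, and the action sets move with the history. The plan is to define, for the positive-measure set of signal histories, the set of "good" action profiles $G=F^*\cap\{\text{feasible}\}$. Its sections in action coordinates are relatively open in the feasible set; this uses strict positivity of the continuous densities $g_t$ and the openness of $F$'s sections. Then we show by backward induction on $t$ a claim. Let $\phi_t(h)$ be the conditional probability under $\beta$ of completing $h$ into $G$. Then $\phi_t(h)>0$ whenever the corresponding conditional probability under $\beta'$ is positive. The reason is that the set of next actions leading to positive continuation probability is open (by lower semicontinuity of $\phi_{t+1}$ in actions, obtained from Fatou and lower hemicontinuity). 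This set is nonempty within $\mathcal{A}_t$, so full support gives it positive mass. Establishing this lower semicontinuity in actions is where I expect most of the work to lie.
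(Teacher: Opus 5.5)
Your overall plan is sound: you pass to the independent representation with $F^*=F\cap\{\prod_{t\le\tau}g_t>0\}$, then run a backward induction comparing continuation probabilities under $\beta'$ and under the full support profile $\beta$. The step you identify as carrying the work is, however, false as stated. $\phi_{t+1}$ need not be lower semicontinuous in actions, because a full support strategy is only required to be measurable in the private history, not continuous in past actions.

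For a concrete counterexample, let a single player move at periods $1,2,3$, with $A_1=A_2=[0,1]$, trivial signals, constant action correspondences, and $F=\{a_2>1/2\}$. Take any full support $\beta_1$, and write $U_{[a,b]}$ for the uniform distribution on $[a,b]$. Let $\beta_2(0)=U_{[0,1]}$ and $\beta_2(a_1)=a_1U_{[0,1]}+(1-a_1)U_{[0,1/2]}$ for $a_1>0$. This strategy has full support, yet $\phi_2(a_1)=a_1/2$ for $a_1>0$ while $\phi_2(0)=1/2$, so $\phi_2$ is not lower semicontinuous at $0$. Fatou cannot rescue the action step, because there the integrating measure $\beta_t(h)$ itself varies (possibly discontinuously) with the history. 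Fatou only works at the signal step, where $\mu_{t+1}$ does not depend on actions.

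What is true, and is all you need, is that the positivity set $\{\phi_t>0\}$ has open action sections for each fixed signal history. Prove this by induction on indicators of positivity sets rather than on $\phi$ itself.

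\emph{Signal step.} Fatou applied to the lower semicontinuous function $1_{\{\phi_{t+1}>0\}}$ shows that the set $W=\{a_{\le t}:\mu_{t+1}(\{s_{t+1}:\phi_{t+1}>0\})>0\}$ is jointly open in the actions.

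\emph{Action step.} Full support gives $\beta_t(h)(W(h))>0$ if and only if $W(h)\cap\mathcal{A}_t(h)\neq\emptyset$. This condition does not involve $\beta$ at all, and it is open in past actions by lower hemicontinuity of $\mathcal{A}_t$ together with the joint openness of $W$.

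The same characterization gives your comparison $\{\phi'_t>0\}\subseteq\{\phi_t>0\}$, since $\beta'_t(h)$ is concentrated on $\mathcal{A}_t(h)$.

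With this repair, your argument is a legitimate alternative to the paper's. The paper uses Lemma \ref{productify} to extract, inside $F$, a measurable signal set times an open action rectangle of positive probability. It then repeatedly drops the last action coordinate, replacing it with the condition that the action correspondence meets the corresponding open factor (which has open sections by lower hemicontinuity), and extracts a rectangle again. Finally, full support charges the resulting nested rectangle. Your route avoids the co-analytic and universal-measurability machinery of Lemma \ref{productify}. It also tracks explicitly where the densities $g_t$ are positive, which the paper's sketch leaves implicit.
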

\noindent The converse of Lemma \ref{fullsuppinfcompl} does not hold; see Example \ref{twice} below. 

Let $m$ be a profile of strategic measures, $F$ be strategically relevant at $\tau$ to $i$, and assume that $F$ is reached with positive probability under $m$. Then $m_i$ is \emph{$\epsilon$-optimal} at $F$ against $m_{-i}$ if the conditional expected payoff of $(m_i,m_{-i})$ given $F$ is higher than the  conditional expected payoff of $(m_i',m_{-i})$ given $F$ minus $\epsilon$ for every $\tau$-continuation $m_i'$ of $m_i$. Note that the probability of reaching $F$ is the same under $(m_i,m_{-i})$ and $(m_i',m_{-i})$.\bigskip

A profile of strategic measures $m$ is a \emph{sequential equilibrium} if it is the limit of a convergent sequence $\langle m^n \rangle$ of profiles of strategic measures under which every strategically relevant set is reached with positive probability and and for which for every $\epsilon>0$, all but finitely many $n$, and every proper player $i$, $m_i^n$ is $\epsilon$-optimal against $m_{-i}^n$ at every measurable set strategically relevant at any period $\tau$ to $i$. 

That such limits $m$ correspond to sequential equilibrium strategy profiles in a standard finite game follows from Proposition \ref{characterization}, with the caveat that a strategic measure does not prescribe behavior at information sets that cannot occur under that strategic measure. The corresponding beliefs are then obtained by taking any limit point of the induced beliefs by $\langle m^n \rangle$. Clearly, $m$ must be a Nash equilibrium.

We are now ready to prove the existence of a sequential equilibrium. Since strategic measures are closer to mixed strategies than behavior strategies, a bit of care is needed to ensure conditional $\epsilon$-optimality. Also, we need a little lemma about infinite Minkowski sums.

\begin{lemma}\label{Mink}Let $K$ be a compact subset of a Hausdorff topological vector space, $\langle C_n\rangle$ a sequence of compact, convex subsets of $K$, and $\langle \alpha_n\rangle$ a sequence of nonnegative numbers such that $\sum_n \alpha_n=1$. Then the set
\[\sum_n \alpha_n C_n=\bigg\{\sum_n \alpha_n x_n\Bigm\vert x_n\in C_n\textnormal{ for all }n\bigg\}\]
is well-defined, compact, and convex. 
\end{lemma}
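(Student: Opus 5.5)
The plan is to realize $\sum_n \alpha_n C_n$ as the image of the compact convex set $\prod_n C_n$ (product topology; compact by Tychonoff, convex coordinatewise) under the map $\Phi\big((x_n)\big)=\sum_n \alpha_n x_n$. Then I would show three things in order: (i) the series defining $\Phi$ converges for every choice $x_n\in C_n$, which is what ``well-defined'' means; (ii) $\Phi$ is continuous on $\prod_n C_n$; (iii) $\Phi$ is affine. Given these, compactness follows because a continuous image of a compact set is compact in the Hausdorff space, and convexity follows because an affine image of a convex set is convex. Step (iii) is immediate once (i) holds, since finite partial sums are affine and limits preserve affine identities. Without loss of generality I would drop the indices with $\alpha_n=0$.

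For (i) I would work with uniform tail control. The key input is that a compact set $K$ in a topological vector space is bounded: for every neighborhood $V$ of $0$ there is $\delta>0$ with $tK\subseteq V$ for all $|t|\le\delta$. Fix a balanced neighborhood $V_0$ of $0$ and choose balanced neighborhoods with $V_{k+1}+V_{k+1}\subseteq V_k$, so that $V_1+\dots+V_k\subseteq V_0$ for every $k$. I want to show that for $m$ large every tail block $\sum_{n=m}^{p}\alpha_n x_n$ lies in $V_0$, uniformly in $p$ and in the choice of $x_n\in C_n\subseteq K$. This gives the Cauchy property of the partial sums, and the partial sums lie in the compact set $\big(\sum_{n\le m}\alpha_n\big)\,\mathrm{co}$-type combinations of $K$, where a Cauchy net with a cluster point converges. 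The natural route is to group tail indices into consecutive blocks $B_k$ and place block $k$ in $V_k$. Each block sum equals $w_k z_k$, where $w_k=\sum_{n\in B_k}\alpha_n$ and $z_k$ is a convex combination of points taken from different sets $C_n$.

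\textbf{This is the step I expect to be the main obstacle.} Without local convexity, $z_k$ need not stay in a bounded set, so convexity of each individual $C_n$ does not obviously help. My first attempt would be to use the joint continuity of $(t,x,y)\mapsto tx+(1-t)y$ on the compact set $[0,1]\times K\times K$. This gives, for each $V_k$, a $\delta_k$ such that any combination $t x+s y$ with $x,y\in K$ and $t+s\le\delta_k$ lies in $V_k$. I would then try to bootstrap this from pairs to whole blocks by choosing the blocks dyadically according to the remaining tail mass $\sum_{n\ge m}\alpha_n\to0$. If this does not close, I would exploit the specific structure in which the lemma is used: there $K$ is a compact convex set of measures, and by Lemma~\ref{lchtvs} it sits inside a locally convex space, so that tail sums lie in $\big(\sum_{n\ge m}\alpha_n\big)K\to\{0\}$ directly.

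For (ii) I would use the same tail estimate. Given a neighborhood $W$ of $0$, split $W$ as $W_1+W_2+W_3$ with $W_i\subseteq W$. Choose $m$ so that both tails $\sum_{n>m}\alpha_n x_n$ and $\sum_{n>m}\alpha_n x_n'$ lie in $W_2$ and $W_3$ respectively, uniformly over the choices of $x_n,x_n'\in C_n$. The finite head $\sum_{n\le m}\alpha_n x_n$ depends continuously on finitely many coordinates, and so varies within $W_1$ on a product neighborhood. This yields continuity of $\Phi$ at every point, which completes the argument.
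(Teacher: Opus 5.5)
Your architecture matches the paper's. You realize $\sum_n\alpha_nC_n$ as the image of the Tychonoff product $\prod_nC_n$ under the affine summation map, and you prove continuity by splitting into a finite head and uniformly small tails. The gap is step (i). You are right to flag it, but it cannot be closed in the generality you set up, because the lemma is false when $K$ is not convex.

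\textbf{Counterexample to your bootstrap.} In $L^p[0,1]$ with $p=1/4$, put $y_{m,k}=m^{7/2}\mathbf{1}_{[(k-1)/m,\,k/m)}$ for $1\le k\le m$. Then $\int|y_{m,k}|^{p}=m^{-1/8}\to0$, so $K=\{0\}\cup\{y_{m,k}\}$ is compact. Enumerate the $y_{m,k}$ block by block, take singletons $C_n=\{y_{m,k}\}$, and give each index in block $m$ the weight $6/(\pi^2m^3)$. Block $m$ then contributes $(6/\pi^2)\,m^{1/2}\mathbf{1}$, so the partial sums are not even Cauchy. The reason your estimate fails is that joint continuity of $(t,x,y)\mapsto tx+(1-t)y$ only controls combinations of a bounded number of points of $K$. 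The admissible $\delta$ shrinks as that number grows, while tail blocks need unboundedly many points, and no choice of blocks rescues this.

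\textbf{Local convexity does not help either.} In $c_{00}$ with the $\ell^2$-norm, take $K=\{0\}\cup\{e_n/n\}$, $C_n=\{e_n/n\}$ and $\alpha_n=6/(\pi^2n^2)$. The partial sums are Cauchy but have no limit. So your claim that the partial sums stay in a compact set of ``$\mathrm{co}$-type'' combinations of $K$ fails, and the appeal to Lemma~\ref{lchtvs} is beside the point.

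\textbf{What is needed is convexity of $K$.} The paper's proof uses it tacitly. Its sets $K_N=\sum_{n\le N}\alpha_nx_n+r_NK$, with $r_N=1-\sum_{n\le N}\alpha_n$, are nested only because $\alpha_{N+1}x_{N+1}+r_{N+1}K\subseteq r_NK$, and that inclusion is exactly convexity of $K$. Convexity also holds where the lemma is applied, since all the sets there lie in $K=M_i$, which is convex by Lemma~\ref{compconv}.

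With convexity of $K$, your fallback is correct in any Hausdorff topological vector space:
\begin{itemize}
\item Every finite tail $\sum_{n=m}^{p}\alpha_nx_n$, and every infinite tail, lies in the compact set $\{\lambda z: 0\le\lambda\le r_{m-1},\ z\in K\}$.
\item That set sits inside any given balanced neighborhood of $0$ once $r_{m-1}$ is small, because compact sets are bounded.
\item The partial sums lie in the compact set $\{\lambda z:0\le\lambda\le1,\ z\in K\}$, so Cauchy implies convergent.
\item Steps (ii) and (iii) then go through as you wrote them.
\end{itemize}
The paper instead obtains the limit directly as the unique point of $\bigcap_NK_N$, which sidesteps Cauchy sequences; the two arguments are equivalent. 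So state convexity of $K$ as a hypothesis, and drop both the bootstrap and the reliance on local convexity.
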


\begin{theorem}\label{exist}A sequential equilibrium exists.
\end{theorem}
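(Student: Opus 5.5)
The plan is a Selten-style perturbation argument carried out in the space of strategic measures. For each $n$, I restrict every proper player $i$ to a compact convex set $M_i^n\subseteq M_i$ of \emph{perturbed} strategic measures. I show that the perturbed game has a Nash equilibrium $m^n$ and that every $m^n_i$ is $\epsilon_n$-optimal at every strategically relevant set, with $\epsilon_n\to 0$. Compactness of $\prod_i M_i$, which follows from Lemma \ref{compconv} and Tychonoff, metrizable because $N$ is countable, then gives a convergent subsequence, and its limit is a sequential equilibrium by definition.

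\textbf{The perturbed sets.} Fix a full support strategic measure $\bar m_i$ for each player (Lemma \ref{fullsupp}) and let $\delta_n\downarrow 0$. The naive choice $M_i^n=(1-\delta_n)M_i+\delta_n\bar m_i$ is a mixture of strategic measures, hence behaves like a mixed strategy. It guarantees that strategically relevant sets are reached with positive probability (Lemma \ref{fullsuppinfcompl}), but not conditional optimality at sets that the player's own earlier choices make unlikely. I therefore perturb \emph{period by period}. Enumerate the periods $\tau_1<\tau_2<\dots$ at which $i$ is active. Let $M_i^n$ be the set of strategic measures induced by strategies that at each such $\tau_k$ play $\delta_n 2^{-k}\bar\beta_{\tau_k}+(1-\delta_n2^{-k})\beta_{\tau_k}$, where $\bar\beta$ induces $\bar m_i$. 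To represent this at the level of strategic measures I will use Lemma \ref{Mink}. The measures are built recursively over $\tau$-continuations (Lemma \ref{cont}), so $M_i^n$ is an infinite Minkowski combination $\sum_k\alpha_k C_k$ of compact convex subsets of the compact set $M_i$: $C_k$ is the set of measures that follow an arbitrary strategy up to $\tau_k$ and then switch to $\bar\beta$, and the weights $\alpha_k$ come from the $\delta_n2^{-k}$ together with a residual weight. Lemma \ref{Mink} then makes $M_i^n$ nonempty, compact and convex, and by Lemma \ref{lchtvs} it is a compact convex subset of a locally convex space.

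\textbf{Existence of the perturbed equilibrium.} Payoffs are continuous and multilinear by Lemma \ref{expected}, so best-response correspondences on $\prod_i M_i^n$ are nonempty, convex-valued and upper hemicontinuous. Kakutani–Fan–Glicksberg yields an equilibrium $m^n$. Each $m^n$ dominates a positive multiple of a full support profile, so every strategically relevant set is reached with positive probability (Lemma \ref{fullsuppinfcompl}).

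\textbf{Main obstacle: conditional $\epsilon_n$-optimality.} Suppose $F$ is strategically relevant at $\tau$ for $i$ and there is a $\tau$-continuation $m_i'$ that gains more than $\epsilon$ conditionally on $F$. I modify $m^n_i$ only after $F$ by replacing its continuation on $F$ with the one from $m_i'$, keeping the forced $\delta_n$-trembles. Conditioning on $F$ is legitimate because $F$ is a set of private histories, so by perfect recall the replacement affects nothing outside $F$. The new measure lies in $M_i^n$, and its unconditional payoff gain is at least
\[
\Pr(F)\bigl(\epsilon-O(\delta_n)\bigr)>0 ,
\]
which contradicts the equilibrium property; the $O(\delta_n)$ loss comes from the trembles, and Lemma \ref{continf} makes the tail periods uniformly negligible. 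The delicate part is doing the ``splice on $F$'' with strategic measures rather than behavior strategies. I would pass to inducing strategies, splice there, and re-induce; Lemma \ref{expected} guarantees that payoffs depend only on the strategic measures. Choosing $\epsilon_n$ as a constant multiple of $\delta_n$, times the bound on payoffs, completes the argument.
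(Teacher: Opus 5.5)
Your proposal has a genuine gap in the definition of $M_i^n$. You describe two different sets, and neither does both jobs.

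\textbf{The tremble set is not closed.} The set of strategic measures induced by the period-wise trembles $p_k\bar\beta_{\tau_k}+(1-p_k)\beta_{\tau_k}$, $p_k=\delta_n2^{-k}$, is the one for which your estimate $\Pr(F)(\epsilon-O(\delta_n))$ is correct. Conditional on any $F$ at $\tau$, the probability of trembling at $\tau$ or later is at most $\sum_k p_k$. The set is also convex. It is not closed in general, however, because $\bar\beta$ is only measurable in past actions. Take one player, trivial signals, $A_{\tau_1}=A_{\tau_2}=[0,1]$, $\bar\beta_{\tau_1}=\lambda$ (Lebesgue measure), and let $\bar\beta_{\tau_2}$ equal $\frac12\lambda+\frac12\delta_0$ after $a_{\tau_1}=0$ and $\frac12\lambda+\frac12\delta_1$ otherwise. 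The measures induced by $p_1\lambda+(1-p_1)\delta_{1/j}$ followed by $\bar\beta_{\tau_2}$ lie in the set. They converge to a measure whose conditional at the atom $a_{\tau_1}=0$ puts no mass on $0$, whereas the constraint requires mass at least $p_2/2$ there. So Kakutani--Fan--Glicksberg does not apply to this set as it stands. To use it you would need a full support strategy that is weakly continuous in past actions, so that the tremble constraints become closed via Lemma~\ref{Young}; you do not supply one.

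\textbf{The Minkowski set fails the conditional estimate.} The combination $\sum_k\alpha_kC_k$ with $C_k$ = ``arbitrary before $\tau_k$, $\bar\beta$ from $\tau_k$ on'' is compact and convex by Lemma~\ref{Mink}. But it is not the tremble set: no component trembles at $\tau_k$ and then returns to free play. It also fails the conditional estimate for exactly the reason you gave against the naive perturbation. Components that switched before $\tau$ are locked into $\bar\beta$ at $\tau$, and they can be the only ones reaching $F$. Suppose actions in an open set $U$ are costly at $\tau_1$, and $F$ consists of the histories at $\tau_2$ with $a_{\tau_1}\in U$. Then all unconstrained parts avoid $U$, only $C_1=\{\bar m_i\}$ reaches $F$, and conditional on $F$ the player follows $\bar\beta$ with probability one for every $n$. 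Your splice cannot help: within $M_i^n$ it may only alter the unconstrained parts, and these never reach $F$.

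\textbf{What the paper does instead.} The missing idea is the orientation of the components and the choice of weights. The paper takes $M_i^*(\tau)$, the $\tau$-continuations of $\bar m_i$: follow the full support measure \emph{before} $\tau$, unconstrained from $\tau$ on. This set is compact and convex by Lemma~\ref{cont}, and the paper sets $M_i^n=\sum_{N(\tau)=i}n^{-\tau}M_i^*(\tau)+\big(1-\sum_{N(\tau)=i}n^{-\tau}\big)M_i$. Fix $F$ strategically relevant at $\tau$.
\begin{itemize}
\item The residual component and the components with $t\le\tau$ are unconstrained from $\tau$ on. A component-wise splice, which stays in $M_i^*(t)$, shows they are conditionally optimal on $F$ in a restricted equilibrium.
\item All components with $t\ge\tau$ agree with $\bar m_i$ before $\tau$, so they reach $F$ with one and the same positive probability.
\item Hence the only possibly suboptimal components, those with $t>\tau$, carry conditional weight at most $\sum_{t>\tau}n^{-t}/\sum_{t\ge\tau}n^{-t}=1/n$.
\end{itemize}
The geometric weights are essential. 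With weights of the form $\delta_n2^{-k}$ the corresponding ratio is $1/2$ for every $n$, so even correctly oriented components would not give $\epsilon_n\to0$.
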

\begin{proof}By Lemma \ref{fullsupp}, there exists a profile of full support strategic measures $(m_i^*)_{i\in N}$. For each proper player $i$ and each natural number $n\geq 2$, we construct a modified space of strategic measures $M_i^n$ of player $i$ such that for corresponding profiles of strategic measures every strategically relevant set is reached with positive probability and such that best-responses in such spaces guarantee conditional $1/n$-optimality.  For each period $\tau$ at which $i$ is active, let $M_i^*(\tau)$ be the space of $\tau$-continuations of $m_i^*$. Now, define a modified space of strategic measures by
\[M_i^n=\sum_{N(\tau)=i} 1/n^\tau M_i^*(\tau)+\bigg(1-\sum_{N(\tau)=i} 1/n^\tau\bigg)M_i.\]
By Lemma \ref{compconv}, Lemma \ref{lchtvs}, and Lemma \ref{Mink}, $M_i^n$ is compact and convex as a weighted average of compact and convex sets and, since it contains $m_i^*$, nonempty. Moreover, for a profile of strategic measures in which each proper player $i$ chooses from $M_i^n$, every strategically relevant set is reached with positive probability.  Combining everything with Lemma \ref{expected} and the Kakutani–Fan–Glicksberg fixed-point theorem, there exists for each $n\geq 2$ a Nash equilibrium $m^n$ in which all players are restricted to choosing strategic measures from the restricted space $M_i^n$.  One can think of player $i$ being forced to choose from $M_i^n$ as them being constrained to choosing from $M_i^*(\tau)$ with probability $1/n^\tau$ for each $\tau$ such that $N(\tau)=i$, and being otherwise unconstrained. By Lemma \ref{compconv} and passing to a subsequence, we can assume without loss of generality that $\langle m^n\rangle$ converges. 

It remains to show that for every $\epsilon>0$, all but finitely many $n$, every proper player $i\in N$, and every set $F$ strategically relevant in some period $\tau$ to some player $i$, $m_i^n$ is $\epsilon$-optimal at $F$ against $m_{-i}^n$. By Lemma \ref{fullsuppinfcompl}, $F$ is reached with positive probability under each $m^n$. Assume without loss of generality that $v_i$ takes values in $[0,1]$. Let $n>1/\epsilon$. We can decompose $m_i^n$ as 
\[m_i^n=\sum_{N(t)=i} 1/n^t m_t+\bigg(1-\sum_{N(t)=i} 1/n^t\bigg)m\]
with $m_t\in M_i^*(t)$ for each $t$ and $m\in M_i$. Note that if $F$ is reached with positive probability under $m$ or $m_t$ with $t\leq \tau$, then the corresponding $\tau$-continuation must already be optimal since the respective spaces of strategic measures they belong to impose no restrictions after period $\tau$. All conditional suboptimal behavior must come from the $m_t$ with $t>\tau$. Note also that the probability of reaching $F$ is the same (and positive) for all $m_t$ with $t\geq\tau$. It suffices, therefore, to show that the conditional probability of $F$ being reached by $m_t$ with $t>\tau$ is small enough. This probability is at most 
\[\frac{\sum_{N(t)=i, t>\tau} 1/n^t}{1/n^\tau+\sum_{N(t)=i, t>\tau}1/n^t}\leq \frac{\sum_{t=\tau+1}^\infty 1/n^t}{\sum_{t=\tau}^\infty 1/n^t}=1/n<\epsilon.\] Here, we have used that $a/b\leq (a+c)/(b+c)$ for $0<a\leq b$ and $c\geq 0$.
Since the payoff function has values in $[0,1]$, the conditional expected payoff must be within $\epsilon$ of the maximal conditional expected payoff.
\end{proof}

\begin{proposition}The set of sequential equilibria is compact.
\end{proposition}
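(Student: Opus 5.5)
Since the space $\prod_{i\in N} M_i$ of profiles of strategic measures is compact by Lemma \ref{compconv} and Tychonoff's theorem, and metrizable (each $M_i$ is a compact subset of $\Delta$ of a Polish space and $N$ is countable), it suffices to show that the set of sequential equilibria is closed. Because the product space is metrizable, sequential closedness is enough. So let $\langle m^k\rangle$ be a sequence of sequential equilibria converging to some profile $m$; we will show that $m$ is itself a sequential equilibrium by a diagonal argument.

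Fix a compatible metric $d$ on $\prod_{i\in N} M_i$. For each $k$, by definition there is a sequence $\langle m^{k,n}\rangle_n$ converging to $m^k$ with three properties. First, under every $m^{k,n}$ every strategically relevant set is reached with positive probability. Second, for every $\epsilon>0$ there is $n_0(k,\epsilon)$ such that for all $n\geq n_0(k,\epsilon)$, every proper player $i$, and every set $F$ strategically relevant at any period to $i$, $m_i^{k,n}$ is $\epsilon$-optimal against $m_{-i}^{k,n}$ at $F$. Third, by convergence, $d(m^{k,n},m^k)<1/k$ for large $n$. For each $k$, choose $n_k$ large enough that both $d(m^{k,n_k},m^k)<1/k$ and $n_k\geq n_0(k,1/k)$, and set $\hat m^k=m^{k,n_k}$.

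Then $d(\hat m^k,m)\leq 1/k+d(m^k,m)\to 0$. Each $\hat m^k$ reaches every strategically relevant set with positive probability. Given $\epsilon>0$, for all $k>1/\epsilon$ and every proper player $i$, $\hat m_i^k$ is $1/k$-optimal, hence $\epsilon$-optimal, against $\hat m_{-i}^k$ at every strategically relevant set of $i$. This holds because $\epsilon$-optimality is monotone in $\epsilon$: if the conditional payoff exceeds every continuation's payoff minus $1/k$, it also exceeds it minus $\epsilon$. Thus $\langle \hat m^k\rangle$ witnesses that $m$ is a sequential equilibrium.

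The main point to check is that the approximating conditions are uniform in $i$ and $F$, so that a single threshold $n_0(k,\epsilon)$ exists. This is built into the definition: the phrase "for all but finitely many $n$, and every proper player $i$" quantifies the threshold before $i$ and $F$. The second point requiring care is metrizability. Each $M_i$ is metrizable because the weak topology on $\Delta$ of a Polish space is metrizable, and a countable product of metrizable spaces is metrizable. This lets us use sequences rather than nets. The set of sequential equilibria is nonempty by Theorem \ref{exist}, though compactness does not require this.
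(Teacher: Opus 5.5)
Your proof is correct and takes essentially the same approach as the paper: you reduce compactness to closedness using compactness of the profile space, then run a diagonal argument. From each witnessing sequence you pick a term that is $1/k$-optimal at every strategically relevant set and within $1/k$ of the $k$th equilibrium, which gives a sequence within $2/k$ of the limit, as the paper does. You are slightly more explicit than the paper about metrizability of the countable product and about why the definition's quantifier order gives a threshold that is uniform in $i$ and $F$.
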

\begin{proof} Since the space of profiles of strategic measures is compact, it suffices to show that the set of sequential equilibria is closed. Let $m$ be a closure point of the set of sequential equilibria. Let $d$ be a metric that metrizes the topology on the space of profiles of strategic measures. For each $k$, let $m_k$ be a sequential equilibrium such that $d(m,m_k)<1/k$. For each $n$, let $\langle m_k^n\rangle$ be a sequence converging to $m_k$ that verifies that $m_k$ is a sequential equilibrium. Let $n(0)$ be any natural number and define recursively for all $k>0$ the number $n(k)$ as the smallest number larger than $n(k-1)$ such that for all $n\geq n(k)$ every player plays  $1/k$-optimal in $m_k^n$ against the other players at every measurable set strategically relevant at any period to the player, and such that $d(m_k^n,m_k)<1/k$. Then, $d(m,m_k^{n(k)})<2/k$, so the sequence $\langle m_k^{n(k)}\rangle$ converges to $m$. By construction, the sequence verifies that $m$ is a sequential equilibrium.
\end{proof}

In the proof of Theorem \ref{exist}, we made use of the fact that the mixture of a profile of strategic measures under which every strategically relevant set is reached with positive probability with another profile of strategic measure is a profile of strategic measures under which every strategically relevant set is reached with positive probability. In particular, the mixture of a profile of full suppport strategic measure s with another profile of strategic measure is a profile of strategic measures under which every strategically relevant set is reached with positive probability. However a mixture of a full support strategic measure with another strategic measure need not be a full support strategic measure:\bigskip

\example\label{twice} Consider a single player who has to choose twice from $[0,1]$. One full support strategy chooses first uniformly from $[0,1]$ and chooses again uniformly from $[0,1]$ after each action in the first period. Consider now the strategy that first chooses $0$ and in the second period chooses the previous action. Take a mixture of the corresponding strategic measures with probability $1/2$. For the resulting strategic measure, $0$ is chosen with probability $1/2$, and the conditional probability of choosing $0$ in the second period given the choice of $0$ in the first period is $1$. So a strategy inducing this strategic measure cannot have full support after playing $0$ in the first period.

\section{Application: Noisy sequential Duopoly}\label{duop}

We apply our solution concept to a noisy version of a sequential duopoly. It is a classic result that when firms compete in quantities 
à la Cournot, there is a first mover advantage when firms move sequentially. The result is based on the second firm to move observing the quantity choice of the first firm perfectly. In a famous paper, \citet{bagwell1995commitment} argued that perfect observability is too strong an assumption for realistic applications and suggested that the second firm to move should only receive a noisy signal of the first firm's chosen quantity. In a setting in which firms can only choose among finitely many quantities and there are finitely many possible signals, Bagwell showed that if the support of the conditional signal distributions does not vary with the actions of the first firm, the only equilibrium quantity choices in any Nash equilibrium in pure strategies are the static Cournot quantities. This result has been adapted to a setting with continuous quantities by \citet{Maggi1999}. Maggi actually allows the supports of signals to vary, and we build on his result. In his setting, two identical firms choose quantities from an interval $[0,\bar{q}]$ with $\bar{q}$ larger than the quantity a single firm acting as a monopolist would choose. The profit of firm producing the quantity $q_i$ against $q_{-i}$ is given by $\pi(q_i,q_{-i})$. The profit function $\pi$ is assumed to be twice continuously differentiable and satisfies the derivative conditions $\pi_{11}<\pi_{12}<0$. To exclude boundary solutions, we assume that the marginal profit is strictly positive at $0$ when a single firm acts as a monopolist. The strict concavity in own quantities implies that best responses are unique and pure; the best responses are given by the reaction function $q_{-i}\mapsto R(q_{-i})$. There exists a unique Nash equilibrium, which is symmetric and in pure strategies. In this equilibrium, both firms choose the \emph{Cournot quantity} $q^C$. To see this, one uses implicit differentiation to show that the reaction function is differentiable and satisfies $-1<R'<0$. By the mean value theorem, we must have for two quantities $x<y$ some $c\in(x,y)$ such that $R(x)-R(y)=R'(c)(x-y)$. Therefore, \[|R(x)-R(y)|\leq |R'(c)||x-y|<|x-y|,\] so an equilibrium must be symmetric; and there can clearly be at most one symmetric equilibrium. If firms move sequentially under perfect observability, there is a unique subgame perfect equilibrium. The first firm to choose, the leader, chooses a quantity $q^l$ that maximizes $\pi(q_i,R(q_i))$, and the second firm, the follower, chooses $q^f=R(q^l)$. Then $\pi(q^l,q^f)>\pi(q^C,q^C)>\pi(q^f,q^l)$; the leader has an advantage. 

In Maggi's model with imperfect observability,\footnote{\citet{Maggi1999} extends the model to allow for incomplete information, too, and that is the main focus of his paper. Here, we focus on the model with complete information but imperfect observability.} the signal is given by a jointly continuous conditional density $g:[0,\bar{q}]\times\mathbb{R}\to\mathbb{R}$ such that for two nondecreasing differentiable functions $g_0,g_1:[0,\bar{q}]\to\mathbb{R}$, one has $g_0(q)<q<g_1(q)$ for all $q\in [0,\bar{q}]$, and such that $g(q,x)>0$ if and only if $g_0(q)<x<g_1(q)$. This model fits all of our assumptions.\footnote{To see that the conditional densities are integrably bounded, note that $g$ vanishes outside $[0,\bar{q}]\times[q_0(0),q_1(\bar{q})]$ and is bounded there.} Now, \citet[Proposition 1]{Maggi1999} says that at every Nash equilibrium in pure strategies, the leader plays $q^C$, and the follower chooses $q^C$ (almost surely) on the equilibrium path. However, as Maggi also notes, such an equilibrium must sometimes violate a minimal form of sequential rationality: $q^C$ need not be a be a best reply to a belief supported on $[q_0(q),q_1(q)]$ for all $q\in [0,\bar{q}]$. We show that if $g_1-g_0$ is small enough, there is a first mover advantage. A sequential equilibrium exists by Theorem \ref{exist}. We go through the proof in the main text to illustrate how our notion of sequential equilibrium can be practically applied.
\begin{proposition}Fix $\pi$. For all $\epsilon>0$, there exists $\delta>0$ such that if the signal satisfies $|g_1(q)-g_0(q)|\leq \delta$ for all $q\in [0,\bar{q}]$, then the expected payoff of the leader in every sequential equilibrium must be at least $\pi(q^l,q^f)-\epsilon$. 
\end{proposition}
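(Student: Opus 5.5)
The argument has two parts. First I show that in any sequential equilibrium the follower's conditional behavior after each signal is close to a best reply to some belief concentrated near the signal. Then I show that the leader, by deviating to $q^l$, guarantees a payoff close to $\pi(q^l,R(q^l))$. Since $m$ is a Nash equilibrium (as noted after the definition), the leader's equilibrium payoff is at least her payoff from this deviation, which gives the bound.

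\textbf{Step 1 (follower's behavior).} Let $m$ be a sequential equilibrium with approximating sequence $\langle m^n\rangle$. I test the follower's $\epsilon_n$-optimality on strategically relevant sets of the form $F=I\times[0,\bar q]$, where $I$ is an open interval of signals (action sections are trivially open, and $F$ is reached with positive probability whenever $I$ meets the support region). Conditional on $F$, the leader's quantity is distributed on $\{q: g_0(q)<x<g_1(q)\text{ for some }x\in I\}$. Because $g_1-g_0\le\delta$ and $g_0(q)<q<g_1(q)$, this set lies within $\delta$ of $I$. By strict concavity of $\pi$ in own quantity and continuity, an $\epsilon_n$-optimal conditional reply against a leader quantity distribution supported in $[x-\delta-|I|,\,x+\delta+|I|]$ must put most of its mass near $R$ evaluated on that interval.

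Because $-1<R'<0$ and $\pi(\cdot,q_{-i})$ is strictly concave with $\pi_{11}$ bounded away from $0$ on the compact domain, I expect a quantitative bound: the conditional expected payoff loss from choosing $q$ instead of the best reply $q^*$ is at least $c|q-q^*|^2$ for a uniform $c>0$. I would cover $\mathbb{R}$ by countably many small intervals $I$, use the $\epsilon_n$-optimality bound on each, and sum. The conclusion is that under $m^n$, as $n\to\infty$, the follower's joint distribution of (signal, quantity) assigns vanishing mass to pairs $(x,q_2)$ with $|q_2-R(x)|>\eta(\delta)$, where $\eta(\delta)\to0$ as $\delta\to0$. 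This property is closed in the weak topology, so it passes to the limit strategic measure $m_F$.

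\textbf{Step 2 (leader's deviation).} Let the leader deviate to the point mass at $q^l$. The induced play distribution is computed via Lemma~\ref{expected}: signals $x$ fall in $(g_0(q^l),g_1(q^l))$, an interval of length at most $\delta$ around $q^l$. By Step 1, the follower then plays within $\eta(\delta)$ of $R(x)$, and hence within $\eta(\delta)+|R'|\delta\le \eta(\delta)+\delta$ of $q^f$. This needs care, because Step 1 only controls the strategic measure $m_F$ with respect to the reference measure $\mu$. However, the density $g(q^l,\cdot)$ is bounded, so the mass that the play distribution puts on bad pairs is at most $\sup g$ times the $m_F$-mass of the bad set, which is $0$. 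By continuity of $\pi$, the deviation payoff is at least $\pi(q^l,q^f)-\epsilon$ once $\delta$ is small. Nash optimality of $m$ then finishes the argument.

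\textbf{Main obstacle.} The hard part is the uniform quantitative estimate in Step 1: passing from $\epsilon$-optimality on each interval cell to closeness of actions. The optimality conditions come with a fixed $\epsilon_n$ per cell, so I would send $n\to\infty$ first with a fixed finite partition into cells of width $\delta$, and only then refine. Within a cell, the loss must be measured relative to the best reply to the \emph{conditional} belief. The strong-concavity inequality makes the follower's conditional mass on actions far from that best reply vanish. The remaining task is to check that the best reply to any belief supported within $2\delta$ of $x$ lies within $2\delta$ of $R(x)$, which holds because the belief's support sits inside that interval and $R$ is monotone and $1$-Lipschitz.
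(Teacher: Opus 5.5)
Your two-step structure matches the paper's. You pin down the follower's limit strategic measure through conditional optimality at sets of signals, then let the leader deviate to $q^l$ and use that $m$ is Nash. Step 2 is sound, including your use of the bounded density $g(q^l,\cdot)$ to carry a $\mu$-null bad set over to the deviation play. The gap is in Step 1, where you pass from interval tests to a statement about $m_F$ relative to $\mu$.

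Conditional optimality at a signal set $I$ only sees the \emph{realized} signal distribution. Its density with respect to $\mu$ is $\bar g^n(x)=\int g(q,x)\,m_L^n(\mathrm dq)$. Write $b^n(x)$ for the follower's probability of a bad action after signal $x$. The test at $I$ then bounds $\int_I\bar g^n b^n\,\mathrm d\mu$ relative to $\int_I\bar g^n\,\mathrm d\mu$, so summing over cells bounds the bad mass of $\bar g^n m_F^n$, not of $m_F^n$. The leader's trembles can make $\bar g^n$ vary inside a cell by factors that blow up with $n$. Behavior on the low-density part of a cell is then invisible to the cell test.

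With your order of limits (fixed partition, $n\to\infty$, then refine), all that survives is $\bar g\,m_F(B)=0$ with $\bar g=\int g(q,\cdot)\,m_L(\mathrm dq)$. That is control on equilibrium-path signals only. If the equilibrium leader puts no mass near $q^l$, the signals in $(g_0(q^l),g_1(q^l))$ are off path and you learn nothing about them. Yet these are exactly the signals Step 2 needs, so your claim that Step 1 controls $m_F$ with respect to $\mu$ is unsupported.

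The missing idea is to use the definition at full strength. The follower has no earlier moves, so \emph{every} measurable set of signals that some profile reaches is strategically relevant. Each such set is therefore reached under every $m^n$ and subject to $\epsilon_n$-optimality. Let $Z$ be the set of signals with $g(q,x)>0$ for some $q$, call an action bad at $x$ if it lies farther than $\delta+\eta$ from $R(x)$, and set $F=\{x\in Z: b^n(x)>\theta\}$.

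Measure the loss signal by signal, against the best reply $q^*_x$ to the posterior at $x$. That posterior is supported within $\delta$ of $x$, so $|q^*_x-R(x)|\le\delta$. Do not compare with a single best reply to the cell-conditional belief: the follower's action may vary with $x$, and the loss against one cell-wide reply need not dominate the squared-distance term. If $\mu(F)>0$, your strong-concavity bound gives a conditional loss above $c\eta^2\theta$ at $F$, which is impossible once $\theta\geq\epsilon_n/(c\eta^2)$. Hence $b^n\le\epsilon_n/(c\eta^2)$ holds $\mu$-a.e. on $Z$, the $m_F^n$-mass of bad pairs tends to $0$, and your Portmanteau step and Step 2 go through.

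The paper compresses this into an appeal to ``the definition of a sequential equilibrium''. It is precisely these non-interval signal sets, and the requirement that they be reached along the sequence, that make the appeal valid.
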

\begin{proof} Let $\delta=\max_q |g_1(q)-g_0(q)|$. If every strategically relevant set is reached with positive probability, every interval of the form $[q_0(q),q_1(q)]$ for some $q\in [0,\bar{q}]$ must be reached with positive probability. The only way for the follower to receive a signal in $[q_0(q),q_1(q)]$ with positive probability is if the leader played an action in $[q_0(q)-\delta ,q_1(q)+\delta]$. A best-reply to a probability distribution $\tau$ supported on  $[q_0(q)-\delta ,q_1(q)+\delta]$ must be supported on $\big[R\big(q_0(q)-\delta\big),R\big(q_1(q)+\delta\big)\big]$. Indeed, if a point $q'$ does not lie in this interval, it is possible to find a point that is closer to every best reply to a point in $[q_0(q)-\delta ,q_1(q)+\delta]$, and such point must give a higher payoff on average under $\tau$. Even stronger, since $\pi$ is continuously differentiable and strictly concave in a firm's own quantity, the follower's payoff decreases uniformly as they move away from $\big[R\big(q_0(q)-\delta\big),R\big(q_1(q)+\delta\big)\big]$. This implies that for small $\epsilon'>0$, conditional $\epsilon'$-optimal behavior at $[q_0(q),q_1(q)]$ requires playing with high probability an action in a narrow closed interval around $\big[R\big(q_0(q)-\delta\big),R\big(q_1(q)+\delta\big)\big]$. If $\mu$ is a probability measure underlying the signal space for which $g$ is a suitable family of conditional densities, the definition of a sequential equilibrium together with the Portmanteau theorem characterizing weak convergence implies that if $m$ is the strategic measure of the follower in a sequential equilibrium, $m$ must assign measure $\mu\big([q_0(q),q_1(q)]\big)$ to $[q_0(q),q_1(q)]\times \big[R\big(q_0(q)-\delta\big),R\big(q_1(q)+\delta\big)\big]$ for all $q\in [0,\bar{q}]$. 

Now, take $\epsilon>0$. Since both $R$ and $\pi$ are uniformly continuous, for $\delta$ small enough, it must be the case that the leader playing $q^l$ must get an expected payoff of at least $\pi(q^l,q^f)-\epsilon$. A fortiori, the leader's expected payoff in a sequential equilibrium must be at least as high.   
\end{proof}
It follows from Maggi's result that for $\delta$ small enough, a sequential equilibrium cannot be in pure strategies. The logic underlying Maggi's result goes as follows: If the leader plays a pure quantity $q^*$ in an equilibrium, the follower must almost surely play the best response quantity $R(q^*)$ for signals  in the interval $[q_0(q^*),q_1(q^*)]$. If the leader changes their quantity slightly towards the Cournot quantity, the interval of possible signals changes slightly at the boundary. But at the boundary, continuity forces the density to be close to zero and the payoff-function is bounded, so the effect on the leader's payoff from a change of this interval is of second order. At the same time, there is a dominating first order effect at the major part of the interval that is unchanged, so this effect dominates. The argument does not apply to mixed equilibria; the follower can play a strategy that is decreasing in the signal received.. 

For the original finite model of \citet{bagwell1995commitment}, in which the support of the signals does not depend on the leader's choice, \citet{VanDammeHurkens1997} show that there is a mixed equilibrium in which the leader has a first mover advantage. In Bagwell's model, sequential equilibrium has no bite whatsoever. However, \citet{VanDammeHurkens1997} argue for an equilibrium selection principle that singles out the mixed equilibrium in which the leader has a first mover advantage. In the setting of \citet{Maggi1999} for $\delta$ small enough, our notion of sequential equilibrium has bite and guarantees a first mover advantage.

\section{Discussion}\label{discuss}

Throughout, we assume that payoff functions are bounded. Since our method of proof uses the fact that the conditional density functions for signals can be included in modified payoff functions, we only need an integrable boundedness condition for the modified payoff functions. Sufficient conditions for this to be possible are provided by \citet{FabbriMoroni2024}. Making the unmodified payoff functions bounded allows for a clean separation between assumptions on payoff functions and assumptions on signal functions. In this paper, we want to put the focus on the latter.

Our formulation of an extensive form game is as a multistage game with an infinite time horizon and no simultaneous moves. By making action choices trivial from some period on, we can, of course, accommodate a finite time horizon, and simultaneous moves can be incorporated in the usual way.\footnote{If one allows for a countable number of simultaneously active players, an additional argument is needed to show that the mapping from strategic measures to distributions over histories of a given length is continuous; use \citet[Corollary 2.4.8]{Bogachev2018}.} A more substantive restriction is that players know when they move. In games of perfect recall, every player must have a subjective linear notion of time, as shown by \citet{Ritzberger1999}, but these subjective notions of time need not coincide with any objective notion of time, as required in our formulation. Our formulation saves us from the tedious accounting task of specifying how profiles of private histories are mapped to objective histories. As long as this mapping is continuous, all arguments go through. Indeed, only the proof of Lemma \ref{expected} needs to be adapted.  

The class of games \citet{FabbriMoroni2024} look at is more general in everything but the restriction to finitely many players, countably many available actions at each information set, and a restriction to additively separable payoffs. In particular, Fabbri and Moroni do not restrict themselves to multistage games. 

The class of games we use is not directly comparable with the class of games for which \citet{MyersonReny2020} prove their existence theorem, the class of regular projective games. An obvious difference is that we allow for a countably infinite number of players, an infinite time horizon, and history-dependent action correspondences. On the other hand, we exclude games of perfect information or observable actions in which a player can choose among a continuum of actions. To look at the regular projective games that also satisfy our assumptions, one would need to look at regular projective games in which players only observe some moves of nature, and no two players observe the same  moves of nature. From this perspective, we allow for payoffs that are discontinuous in the moves of nature (signals) and weaken the assumption R.5. on moves of nature made by \citet{MyersonReny2020}. A full support perfect $\epsilon$-equilibrium, the kind of strategy profile Myerson and Reny show exists, induces a strategic measure that is $\epsilon$-optimal at each relevant set. This follows from them defining their notion of limits using a very fine topolgy that requires a form of uniform convergence. This topology is much finer than weak convergence of induced strategic measures, so every limit in their sense is a limit in ours, and they require pointwise conditional optimality. Myerson and Reny perturb moves of nature in their equilibrium notion in a way that makes it impossible to use strategic measures. This, and the pointwise optimality requirement prevent any direct way to transform a conditionally $\epsilon$-optimal equilibrium in strategic measures into a perfect $\epsilon$-equilibrium.

One pleasant feature of sequential equilibria not captured by our approach are conditional beliefs. \citet{KrepsWilson1982} argued that an advantage of working with assessments (a pair consisting of a profile of behavior strategies and a system of conditional beliefs) is that the plausibility of an equilibrium can often be discussed in terms of the plausibility of beliefs. Incorporating conditional beliefs in our solution concept faces the difficulty that there exists no sufficiently fine topology on conditional beliefs to capture sequential rationality that is also coarse enough to be compact. In the setting of large signaling games, \citet*{PereaJansenPeters1997} showed with an example that even pointwise weak convergence of beliefs might be too weak a topology. \citet{MyersonReny2020} have a sequential $\epsilon$-rationality result with respect to finitely additive conditional beliefs as their Theorem 6.15, but not for limit conditional beliefs.\footnote{In their setting, that would not help since there need not be limit strategies.} Similarly, we could easily construct (even countably additive) conditional beliefs for each $m_n$ in $\langle m^n \rangle$ using regular conditional probabilities, \citet[Theorem 8.5]{Kallenberg2021}, such that $\epsilon$-sequential rationality holds on each relevant set. However, there are no useful limit conditional beliefs for which our solution guarantees exact sequential rationality.

\section{Proofs}\label{proofs}

Unless otherwise specified, we endow the product of topological spaces with the product topology, the product of measurable spaces with the product $\sigma$-algebra, and all topological spaces with their Borel $\sigma$-algebras. In the context we are working with (countable products of Polish spaces), these conventions will never clash. We make heavy use of the topology of weak convergence of measures. Since all our probability measures live on Polish spaces, this topology is metrizable; \citet[Theorem 15.12]{AliprantisBorder2003}. In particular, we can work with sequences instead of nets.

If $(X,\mathcal{X})$ is a measurable space and $Y$ and $Z$ are Polish spaces, a function $h:X\times Y\to Z$ is a \emph{Carathéodory function} if $h(\cdot,y):X\to Z$ is measurable for each $y\in Y$ and $h(x,\cdot):Y\to Z$ is continuous for each $x\in X$. In that case, $h$ is jointly measurable; \citet[Lemma 4.51]{AliprantisBorder2003}. We make frequent use of the following Scorza Dragoni theorem, a version of Lusin's theorem for Carathéodory functions.\footnote{The special  case in which $Y$ is compact does actually follow from Lusin's theorem, \citet[Theorem 12.8]{AliprantisBorder2003}, with the help of \citet[Lemma 3.99 and Theorem 4.55]{AliprantisBorder2003}.}

\begin{lemma}\label{ScorzaDragoni}Let $X,Y,Z$ be Polish spaces, $\nu\in\Delta(X)$, and $h:X\times Y\to Z$ a Carathéodory function. Then there exists for each $\epsilon>0$ a compact set $K\subseteq X$ such that $\nu(X\setminus K)<\epsilon$ and such that the restriction of $h$ to $K\times Y$ is continuous.
\end{lemma}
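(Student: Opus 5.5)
The plan is to reduce to the compact case, where the classical Lusin theorem applies after converting $h$ into a map into a function space. First I take care of $Z$. Since $Z$ is Polish, it embeds homeomorphically into a compact metric space (for instance the Hilbert cube $[0,1]^{\mathbb{N}}$). Continuity of the restriction is unaffected by this embedding, so I may assume that $Z$ is a compact metric space with metric $d$.

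Next, the case $Y$ compact. Then $C(Y,Z)$, with the uniform metric, is a separable metric space, hence second countable. I consider the map $\Phi:X\to C(Y,Z)$, $x\mapsto h(x,\cdot)$. I claim $\Phi$ is Borel measurable. Fix a countable dense $D\subseteq Y$. The sup distance $\sup_{y\in D} d(h(x,y),\phi(y))$ to a fixed $\phi\in C(Y,Z)$ is a countable supremum of measurable functions of $x$. Hence preimages of open balls are measurable, and second countability then gives measurability of $\Phi$. Lusin's theorem, \citet[Theorem 12.8]{AliprantisBorder2003}, applied to $\Phi$ and the Radon measure $\nu$ on the Polish space $X$, yields a compact $K$ with $\nu(X\setminus K)<\epsilon$ and $\Phi|_K$ continuous. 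Then $h$ restricted to $K\times Y$ is continuous. To see this, take $(x_n,y_n)\to(x,y)$ and split
\[d\big(h(x_n,y_n),h(x,y)\big)\le \sup_{y'}d\big(h(x_n,y'),h(x,y')\big)+d\big(h(x,y_n),h(x,y)\big).\]
The first term tends to zero by continuity of $\Phi|_K$, and the second by continuity of $h(x,\cdot)$.

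The main obstacle is a noncompact Polish $Y$, where $C(Y,Z)$ with the uniform metric is no longer separable. I would first try to replace the uniform metric by the compact-open topology. For $Y$ merely Polish (not locally compact) that topology is not metrizable, so I would avoid it and instead use a different route. Fix an exhaustion-free argument: cover $Y$ by compacts $L_k$ (closures of finite unions) only where needed, or alternatively apply the compact case to $Y$ replaced by its compactification. The latter fails because $h(x,\cdot)$ need not extend.

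The better route is a sequence argument. Continuity on $K\times Y$ only needs to be checked along convergent sequences $(x_n,y_n)\to(x,y)$. The set $\{y_n\}\cup\{y\}$ is compact, but it is not fixed in advance, so this does not directly help. Therefore I would use the measurable-function-space approach with the metric of uniform convergence on compacts in a different guise. Embed $Y$ as a $G_\delta$ in the Hilbert cube $\bar Y$. Define the oscillation-type functions
\[\omega_j(x,\bar y)=\lim_{r\to 0}\sup\{d(h(x,y),h(x,y')) : y,y'\in B(\bar y,r)\cap D\}.\]
Using them, one can show that off a small set the family $\{h(x,\cdot)\}_{x\in K}$ is equicontinuous at each point of $Y$, uniformly in $x$ over $K$ along convergent sequences. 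This is done by applying Lusin/Egorov to the countably many measurable functions $x\mapsto d(h(x,y),h(x,y'))$, $y,y'\in D$, together with a diagonal choice of tolerances $\epsilon 2^{-j}$. Ensuring this local uniform equicontinuity is the step I expect to require the most care. Once it holds, the same triangle-inequality split as above finishes the proof.
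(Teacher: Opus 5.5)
Your reduction to compact $Z$ and your argument for compact $Y$ are correct. That is exactly the special case the paper's footnote attributes to Lusin's theorem, via separability of $C(Y,Z)$ and measurability of $x\mapsto h(x,\cdot)$. The paper itself proves the lemma only by citing \citet[Theorem 2.5.19]{Denkowskietal2003}.

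The gap is the general case, which the lemma states and the paper needs: in Lemma~\ref{Young}, for instance, $Y$ is an arbitrary Polish space of actions. For that case you only assert that Lusin/Egorov, applied to countably many functions indexed by a dense set $D=\{y_k\}$, makes $\{h(x,\cdot):x\in K\}$ equicontinuous at every point of $Y$. This does not follow, for two reasons.

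First, making the sections $h(\cdot,y_k)$ continuous on $K$ gives no equicontinuity at all. On $[0,1]^2$, take $h(x,y)=\max\{0,1-|y-x|/x^2\}$ for $x>0$ and $h(0,\cdot)=0$. Every section $h(\cdot,y)$ is continuous on all of $[0,1]$, yet $h(1/n,1/n)=1$ for all $n$ while $h(0,0)=0$.

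Second, Egorov applied to the moduli $\sup_{y\in D,\,d(y,y_k)<r}d\big(h(x,y),h(x,y_k)\big)$ yields, for tolerance $2^{-j}$, radii $r_{k,j}$ that you cannot choose. A uniform bound then holds only near points of $\bigcup_k B(y_k,r_{k,j})$. So you obtain equicontinuity only on the dense $G_\delta$ set $\bigcap_j\bigcup_k B(y_k,r_{k,j})$, which need not be all of $Y$. Getting uniform control at uncountably many points from countably many Egorov sets is the whole content of the noncompact case, and your proposal leaves it open.

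One way to close the gap uses a single application of Lusin's theorem.
\begin{itemize}
\item Embed $Y$ and $Z$ in compact metric spaces $\bar Y$ and $\bar Z$. Let $G(x)$ be the closure in $\bar Y\times\bar Z$ of the graph of $h(x,\cdot)$. By continuity of $h(x,\cdot)$, this equals the closure of $\{(y_k,h(x,y_k)):k\in\mathbb{N}\}$.
\item Hence for open $U$, the set $\{x:G(x)\cap U\neq\emptyset\}=\bigcup_k\{x:(y_k,h(x,y_k))\in U\}$ is Borel. So $G$ is Borel measurable into the compact metric space of nonempty compact subsets of $\bar Y\times\bar Z$ with the Hausdorff metric (see \citet[Theorem 18.10]{AliprantisBorder2003}).
\item Lusin's theorem then gives a compact $K$ with $\nu(X\setminus K)<\epsilon$ and $G|_K$ continuous.
\item Let $(x_n,y_n)\to(x,y)$ in $K\times Y$, and let $z$ be a cluster point of $h(x_n,y_n)$ in $\bar Z$. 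Then $(y,z)\in G(x)$, because $(y_n,h(x_n,y_n))\in G(x_n)$ and $G(x_n)\to G(x)$ in the Hausdorff metric.
\item Since $y\in Y$ and $Y$ carries the subspace topology of $\bar Y$, the point $(y,z)$ is a limit of points $(y'_l,h(x,y'_l))$ with $y'_l\to y$ in $Y$. Continuity of $h(x,\cdot)$ then forces $z=h(x,y)$.
\item Compactness of $\bar Z$ gives $h(x_n,y_n)\to h(x,y)$.
\end{itemize}
This argument also covers your compact case.
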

\begin{proof}See \citet*[Theorem 2.5.19]{Denkowskietal2003}.
\end{proof}

An important consequence of the Scorza-Dragoni theorem is that integrals of certain possibly unbounded and discontinuous functions converge under the topology of weak convergence. While such arguments are well known, see for example \citet{BerliocchiLasry1973}, we give the argument for the reader's convenience. Importantly, in our Polish setting, the weak topology coincides with the more involved weak topology on transition probabilities (``narrow convergence of Young measures'') used by \citet{Balder1988}. Another consequence is that the actual topologies on signal-spaces play essentially no role; only their Borel $\sigma$-algebras matter.

\begin{lemma}\label{Young} Let $X$ and $Y$ be Polish spaces, $\Delta_\nu(X\times Y)$ be the space of probability measures on $X\times Y$ with $X$-marginal $\nu$ endowed with the topology of weak convergence of measures. Let $h:X\times Y\to\mathbb{R}$ be a Carathéodory function such that for some nonnegative $\nu$-integrable function $f:X\to\mathbb{R}$ one has $|h(x,y)|\leq f(x)$ for all $x$ and $y$. Let $\langle \mu_n\rangle$ be a sequence in $\Delta_\nu(X\times Y)$ converging to $\mu$. Then $\lim_n \int h~\mathrm d\mu_n=\int h~\mathrm d\mu$.
\end{lemma}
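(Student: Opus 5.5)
The plan is to reduce to the case of a bounded continuous integrand and then pass to the limit, in three steps: truncate the integrand, apply Lemma~\ref{ScorzaDragoni} to get continuity off a small set, and use Tietze extension to reach a globally continuous bounded function.

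\emph{Step 1: truncation.} Since $f$ is $\nu$-integrable, for $\epsilon>0$ choose $M$ with $\int_{\{f>M\}} f~\mathrm d\nu<\epsilon$. Set $h_M=\max(-M,\min(h,M))$. This is again Carathéodory and satisfies $|h_M|\leq\min(f,M)$. For every $\lambda\in\Delta_\nu(X\times Y)$, the pointwise bound $|h-h_M|\leq f\mathbf{1}_{\{f>M\}}$ depends only on $x$, and the $X$-marginal of $\lambda$ is $\nu$. Hence
\[\int |h-h_M|~\mathrm d\lambda\leq \int_{\{f>M\}} f~\mathrm d\nu<\epsilon,\]
uniformly in $\lambda$. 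It therefore suffices to prove the claim for bounded Carathéodory $h$, say $|h|\leq M$.

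\emph{Step 2: continuity off a small set.} For bounded $h$, apply Lemma~\ref{ScorzaDragoni} to get a compact $K\subseteq X$ with $\nu(X\setminus K)<\epsilon/M$ and $h|_{K\times Y}$ continuous. The set $K\times Y$ is closed in the metrizable space $X\times Y$. By the Tietze extension theorem there is a continuous $\tilde h:X\times Y\to[-M,M]$ agreeing with $h$ on $K\times Y$. For any $\lambda\in\Delta_\nu(X\times Y)$,
\[\int|h-\tilde h|~\mathrm d\lambda\leq 2M\,\lambda\big((X\setminus K)\times Y\big)=2M\nu(X\setminus K)<2\epsilon,\]
again uniformly in $\lambda$, because the marginal is fixed.

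\emph{Step 3: passage to the limit.} Since $\tilde h$ is bounded and continuous, weak convergence gives $\int\tilde h~\mathrm d\mu_n\to\int\tilde h~\mathrm d\mu$. The limit $\mu$ also has $X$-marginal $\nu$, because the projection is continuous and so marginals converge weakly. Combining Steps 1 and 2 with a $3\epsilon$-type argument yields $\limsup_n|\int h~\mathrm d\mu_n-\int h~\mathrm d\mu|\leq C\epsilon$ for a fixed constant $C$, and since $\epsilon>0$ was arbitrary the limit follows.

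The main obstacle is making both approximations uniform in $\lambda$. This works only because all measures involved share the fixed $X$-marginal $\nu$. That is precisely why the exceptional set from Scorza Dragoni and the truncation set, both of which live in $X$ alone, can be controlled simultaneously for all $\mu_n$ and for $\mu$.
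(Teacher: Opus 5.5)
Your proposal is correct and follows essentially the same route as the paper's proof. You truncate using the integrability of $f$ and the fixed $X$-marginal $\nu$, then apply Scorza Dragoni and Tietze to obtain a bounded continuous approximant that is uniformly close for all $\lambda\in\Delta_\nu(X\times Y)$, and conclude by weak convergence. The only cosmetic difference is that you truncate by clipping $h$ at $\pm M$, whereas the paper multiplies $h$ by the indicator of $\{f<M\}\times Y$; both give a bounded Carath\'eodory function with error at most $\int_{\{f\ge M\}} f~\mathrm d\nu$.
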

\begin{proof}Let $\epsilon>0$. Since $f$ is $\nu$-integrable, there is some $M>0$ such that 
\[\int_{\{x\mid f(x)\geq M\}}f~\mathrm d\nu\leq\epsilon.\]
Let $h^*$ be given by $h^*=h\cdot 1_{\{x\mid f(x)< M\}\times Y}$. By the Scorza Dragoni theorem, there exists a compact set $K\subseteq X$ such that $\nu(X\setminus K)<\epsilon/2M$ and such that $h^*$ restricted to $K\times Y$ is continuous. By the Tietze extension theorem, \citet[Theorem 2.47]{AliprantisBorder2003}, there exists a continuous function $h^{**}:X\times Y\to [-M,M]$ that coincides with $h^*$ on $K\times Y$. Note that for all $\lambda\in\Delta_\nu(X\times Y)$, we have
\[\bigg|\int h~\mathrm d\lambda-\int h^{**}~\mathrm d\lambda\bigg|=\bigg|\int h~\mathrm d\lambda-\int h^*\mathrm d\lambda+\int h^*\mathrm d\lambda-\int h^{**}~\mathrm d\lambda\bigg|\]
\[\leq\bigg|\int h~\mathrm d\lambda-\int h^*\mathrm d\lambda\bigg|+\bigg|\int h^*\mathrm d\lambda-\int h^{**}~\mathrm d\lambda\bigg|\leq \int |h-h^*|~\mathrm d\lambda + \int |h^*-h^{**}|~\mathrm d\lambda\]
\[=\int_{\{x\mid f(x)\geq M\}\times Y} |h|~\mathrm d\lambda+\int_{(X\setminus K)\times Y}|h^*-h^{**}|~\mathrm d\lambda\leq \epsilon+\epsilon.\]
Now, by weak convergence of measures, we have for $n$ large enough 
\[\bigg|\int h^{**}~\mathrm d\mu_n-\int h^{**}~\mathrm d\mu\bigg|<\epsilon\] and, therefore,
\[\bigg|\int h~\mathrm d\mu_n-\int h~\mathrm d\mu\bigg|=\]
\[\bigg|\int h~\mathrm d\mu_n-\int h^{**}~\mathrm d \mu_n +\int h^{**}~\mathrm d \mu_n-\int h^{**}~\mathrm d\mu+\int h^{**}~\mathrm d\mu-\int h~\mathrm d\mu\bigg|\]
\[\leq \bigg|\int h~\mathrm d\mu_n-\int h^{**}~\mathrm d \mu_n\bigg| + \bigg|\int h^{**}~\mathrm d \mu_n-\int h^{**}~\mathrm d\mu\bigg|+\bigg|\int h^{**}~\mathrm d\mu-\int h~\mathrm d\mu\bigg|\]
\[\leq 2\epsilon+\epsilon+2\epsilon.\]
\end{proof}

\begin{proof}[Proof of Lemma \ref{compconv}]Nonemptiness follows from the existence of strategies. By \citet[Corollary 3.42]{AliprantisBorder2003}, we can view each $A_\tau$ as a Borel subset of a compact metrizable space $A_\tau^*$. If we allow actions to be chosen in these extended spaces and ignore the constraints provided by action correspondences, compactness follows from \citet[Theorem 2]{Nowak1988}. The proof Nowak uses also shows that strategic measures are determined by linear constraints, so the set of strategic measures is convex. We show that those measures being supported on the action correspondences are compact too. 

Fix a player. By relabeling, we can assume without loss of generality that the player is active in the periods $1,2,3,\ldots$ By \citet[Theorem 18.10 and Theorem 17.15]{AliprantisBorder2003}, we can identify a measurable correspondence whose values are nonempty compact subsets of $A_\tau$ with a measurable function into the space $\mathcal{K}_\tau$ of nonempty compact subsets of the Polish space $A_\tau$ endowed with the Hausdorff metric topology, and a continuous correspondence whose values are nonempty compact subsets of $A_\tau$ as a continuous function into $\mathcal{K}_\tau$. Viewed this way, we can identify $\mathcal{A}_\tau$ with a Carathéodory function \[\phi_\tau:\prod_{t\leq\tau}S_t\times\prod_{t<\tau} A_\tau\to \mathcal{K}_\tau.\]
We now construct for each natural number $m$ recursively a sequence of compact sets $\langle K_{\tau,m}\rangle$ with $K_{\tau,m}\subseteq\prod_{t=1}^\tau S_t$ for each $\tau$. For every nonzero period $\tau$, let 
\[\Gamma_\tau=\big\{(s_1,a_1,\ldots,s_\tau,a_\tau)\mid a_t\in\mathcal{A}_t(s_1,a_1,\ldots,s_{t-1},a_{t-1},s_t)\text{ for }t=1,\ldots,\tau\big\}.\]
By Lemma \ref{ScorzaDragoni}, the Scorza Dragoni theorem, (or just by Lusin's theorem), there exists some compact set $K_{1,m}\subseteq S_1$ such that $\phi_1$ is continuous on $K_{1,m}$ and $\mu_1(S_1\setminus K_{1,m})\leq \nicefrac{1}{2m}$.  By \citet[Theorem 17.15 ]{AliprantisBorder2003}, this implies that the correspondence $\mathcal{A}_1$ is continuous on $K_{1,m}$ and, in particular, upper hemicontinuous. Since upper hemicontinuous correspondences with compact values map compact sets to compact sets, \citet[Lemma 17.8 ]{AliprantisBorder2003}, the set $K_{1,m}\times\mathcal{A}_1(K_{1,m})$ is compact too.  Given that $K_{1,m},\ldots,K_{\tau-1,m}$ are defined, we use the Scorza-Dragoni theorem to construct a compact set $K_{\tau,m}\subseteq S_1\times \cdots\times S_\tau$ 
such that \[\otimes_{t=1}^\tau\mu_t(S_1\times\cdots\times S_\tau\setminus K_{\tau,m})\leq \nicefrac{1}{2^\tau m}\] and such that $\phi_\tau$ is continuous on the compact set
\[\begin{split}
C_m^\tau=\big\{(s_1,a_1,\ldots,s_\tau)\mid~ & (s_1,a_1,\ldots,s_{\tau-1},a_{\tau-1})\in\Gamma_{\tau-1}\textnormal{ and}\\
& (s_1,\ldots,s_t)\in K_{t,m}\text{ for every }t=1,\ldots,\tau\big\}.
\end{split}\]
Now, let $C_m\subseteq\prod_{\tau=1}^\infty S_\tau\times\prod_{\tau=1}^\infty A_\tau$ be the set of elements such that the projection onto $\prod_{t=1}^{\tau} A_t\times\prod_{t=1}^\tau S_t$ lies in $C_m^\tau$ for every $\tau$. Given a sequence with values in $C_m$, one can use a diagonal argument to construct a subsequence such that the projection onto each set of the form $\prod_{t=1}^{\tau} A_t\times\prod_{t=1}^\tau S_t$ converges using the compactness of $C_m^\tau$. So $C_m$ is compact too. Notice that a compact set is compact in every space it is embedded in; the set $C_m$ is still compact if we would replace each $A_\tau$ by $A_\tau^*$. A strategic measure on the larger space that ignores the action correspondences is an actual strategic measure if and only if it assigns measure at least $1-1/m$ to $C_m$ for each $m$. For each $m$, the set of measures that assign measure at least $1-1/m$ to $C_m$ is closed by the Portmanteau theorem. We can, therefore, write the set of strategic measures as an intersection of compact sets of measures, which is again compact in the space of measures defined on the larger space. To finish the proof, we identify the space of measures on $\prod_{\tau=1}^\infty S_\tau\times\prod_{\tau=1}^\infty A_\tau$ with the space of measures on $\prod_{\tau=1}^\infty S_\tau\times\prod_{\tau=1}^\infty A_\tau^*$ supported on the former space. Under this identification, the weak topology on the space of measures on $\prod_{\tau=1}^\infty S_\tau\times\prod_{\tau=1}^\infty A_\tau$ is the trace of the weak topology
 on the space of measures on $\prod_{\tau=1}^\infty S_\tau\times\prod_{\tau=1}^\infty A_\tau^*$ by Lemma \citet[Lemma 15.4]{AliprantisBorder2003}. Consequently, the set of strategic measures of a player is compact in the weak topology.
\end{proof}

\begin{proof}[Proof of Lemma \ref{lchtvs}]The Polish space $\prod_{N(\tau)=i}S_\tau\times \prod_{N(\tau)=i}A_\tau$ is homeomorphic to a Borel (actually, a $G_\delta$) subset of a compact metrizable space $K$ by \citet[Corollary 3.42]{AliprantisBorder2003}. We can then identify a probability measure on the former space with a probability measure on $K$ that is supported on the former space. Clearly, this identification preserves convexity. By Lemma \citet[Lemma 15.4]{AliprantisBorder2003}, the weak topology on $\Delta(\prod_{N(\tau)=i}S_\tau\times A_\tau)$ is, under the identification, the trace of the weak topology on $\Delta(K)$. Since compactness in the trace topology implies compactness in the ambient space, the set of strategic measures can be identified with a compact convex subset of $\Delta(K)$. But  $\Delta(K)$ endowed with the weak topology can be identified with a closed subset of the unit ball of the dual of $C(K)$ endowed with the weak*-topology by the Riesz representation theorem; \citet[Corollary 14.15]{AliprantisBorder2003}, and, therefore, with a compact convex subset of a locally convex Hausdorff topological vector space by Alaoglu's theorem; \citet[Theorem 6.21]{AliprantisBorder2003}.
\end{proof}

\begin{proof}[Proof of Lemma \ref{continf}]Assume without loss of generality that $v_i$ takes values between $0$ and $1$. By iterative application of the Scorza-Dragoni theorem in a way similar to the proof of Lemma \ref{compconv}, we can construct a compact set $C\subseteq P$ on which $v_i$ is continuous and such that with probability at least $1-\epsilon/2$, all feasible plays will lie in $C$. Also as in the proof of Lemma \ref{compconv}, we can think of $P$ as being embedded in a compact product space $\prod_{\tau=0}^\infty S_\tau^*\times\prod_{\tau=1}^\infty A_\tau^*$. Again, $C$ is compact in the larger spaces and, therefore, closed in the larger space. By the Tietze extension theorem, we can extend the restriction of $v_i$ to $C$ to a continuous function on the larger space that takes values between $0$ and $1$ too. To finish the proof, it suffices to show that there exists a continuous function $v_i^T:\prod_{\tau=1}^\infty S_\tau^*\times\prod_{\tau=0}^\infty A_\tau^*\to \mathbb{R}$ that is uniformly $\epsilon/2$-close to this function and depends on only $T$-periods for some $T$. The family of continuous functions on the compact space $\prod_{\tau=1}^\infty S_\tau^*\times\prod_{\tau=0}^\infty A_\tau^*$ that depend on only finitely many periods is a point-separating algebra containing the constants. By the Stone-Weierstrass theorem, \citet[Theorem 9.13]{AliprantisBorder2003}, it is uniformly dense. 
\end{proof}

\begin{proof}[Proof of Lemma \ref{expected}]
It follows from Lemma \ref{continf} that we can assume without loss of generality that payoffs depend only on actions and signals up to some period $T$ in order to prove continuity.  Note that each $(m_i)_{i\in N}\in\prod_{i\in N}M_i$ induces a probability measure $\otimes_{i\in N}m_i$ that is, up to relabeling, a probability measure on $P$ whose $\prod_{\tau=0}^T S_\tau$-marginal is $\otimes_{\tau=0}^T\mu_\tau$ and with all such marginals independent of each other. Moreover, the function from profiles of strategic measures to such a measure on $P$ is continuous by \citet[Theorem 2.8]{Billingsley1999}.  However, to obtain the correct induced probability measure corresponding to the strategies, one has to take account of the densities $g_\tau$. Let each $f_\tau$ be as specified in our continuity of information assumption. Let $f:\prod_{\tau=1}^T S_\tau\to\mathbb{R}$ be defined by $f(s_1,s_2,\ldots,s_T)=f_1(s)f_2(s_2)\cdots f_T(s_T)$. The nonnegative function $f$ is $\otimes_{\tau=1}^T\mu_\tau$ integrable by Tonelli's theorem; \citet[Theorem 11.28]{AliprantisBorder2003}. Write $P_T$ for the space of initial segments of plays up to period $T$. We can interpret each $g_\tau$ as a function on $P_T$ that only depends on $H_\tau\times S_\tau$. We can then define $g:P_T\to\mathbb{R}$ as $g_1g_2\cdots g_T$. The true distribution on $P_T$ with respect to the real strategies has Radon-Nikodym derivative $g_T$ with respect to the corresponding marginal of $\otimes_{i\in N}m_i$.  Now, player $j$'s expected payoff function in terms of strategic measures is given by \[\otimes_{i\in N}m_i\mapsto\int v_jg~\mathrm d\otimes_{i\in N}m_i.\] The continuity of this function follows from Lemma \ref{Young}. To apply the lemma, note that $v_j g$ is a Carathéodory function, as the pointwise product of two  Carathéodory functions. Also, since $v_j$ takes values in $[0,1]$, we have for all $(s,a)\in\prod_{\tau=1}^T S_\tau\times\prod_{\tau=0}^T A_\tau$, that $|v_j g(s,a)|=v_j g(s,a)\leq g(s,a)\leq f(s)$.  

Multilinearity is straightforward by Tonelli's theorem; \citet[Theorem 11.28]{AliprantisBorder2003}.\footnote{Often, a related result on iterated integrals of nonnegative function is known as Tonelli's theorem. Here, it refers to the fact that iterated integrals are equal if the individual functions are integrable under $\sigma$-finite measures.}
\end{proof}

\begin{proof}[Proof of Lemma \ref{fullsupp}]
For each correspondence $\mathcal{A}_\tau$ there exists a countable sequence $\langle f_n\rangle$ of measurable functions such that \[\mathcal{A}_\tau(h,s)=\mathnormal{cl}\Big(\big\{f_n(h,s)\mid n\in\mathbb{N}\big\}\Big)\]
by the Castaing representation of a measurable correspondence with nonempty closed values in a Polish space; \citet[Corollary 18.14]{AliprantisBorder2003}. One can then construct full support strategies with $\beta_\tau$ given by 
\[\beta_\tau(h,s)=\sum_{n=1}^\infty 1/2^l \delta_{f_n(h,s)}\]
for all $(h,s)\in H_\tau^p\times S_\tau$.
\end{proof}

\begin{lemma}\label{productify}Let $S$ and $A$ be Polish spaces. Let $F\subseteq S\times A$ be a measurable set such that whenever $(s,a)$ is in $F$, then there is an open neighborhood $U$ of $a$ such that $\{s\}\times U\subseteq F$. Let $\mu$ be a probability measure on $S\times A$ such that $\mu(F)>0$. Then there exists a measurable set $E\subseteq S$ and an open set $O\subseteq A$ such that $E\times O\subseteq F$ and $\mu(E\times O)>0$.
\end{lemma}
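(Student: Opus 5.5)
The plan is to use a countable base of the topology of $A$ to write $F$ as a countable union of measurable rectangles $E\times O$ with $O$ open and $E$ measurable, each contained in $F$. Since $\mu(F)>0$, countable subadditivity will then force one of these rectangles to have positive measure.

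First, fix a countable base $\{O_k\}_{k\in\mathbb{N}}$ of nonempty open sets for $A$, which exists because $A$ is Polish and hence second countable. For each $k$, define
\[E_k=\{s\in S\mid \{s\}\times O_k\subseteq F\}.\]
By construction, $E_k\times O_k\subseteq F$. The hypothesis says that for every $(s,a)\in F$ there is an open $U\ni a$ with $\{s\}\times U\subseteq F$. Choosing $k$ with $a\in O_k\subseteq U$ gives $s\in E_k$ and $(s,a)\in E_k\times O_k$. Hence $F=\bigcup_k E_k\times O_k$. Once each $E_k$ is known to be measurable, countable subadditivity gives $0<\mu(F)\leq\sum_k\mu(E_k\times O_k)$, so some $k$ satisfies $\mu(E_k\times O_k)>0$, and $E=E_k$, $O=O_k$ work.

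The main obstacle is measurability of $E_k$. Its complement is the projection onto $S$ of the Borel set $(S\times O_k)\setminus F$. That projection is analytic, so $E_k$ is co-analytic and in general not Borel.

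I would get around this by noting that measurability only matters relative to $\mu$. Analytic sets are universally measurable, so $E_k$ is measurable with respect to the completion of the $S$-marginal $\mu_S$ of $\mu$. Therefore there is a Borel set $E_k'\subseteq E_k$ with $\mu_S(E_k\setminus E_k')=0$. This still gives $E_k'\times O_k\subseteq F$. Moreover, $\mu\big((E_k\setminus E_k')\times O_k\big)$ is zero, because it is bounded by the outer $\mu_S$-measure of the null set $E_k\setminus E_k'$. So $F\setminus\bigcup_k E_k'\times O_k$ is $\mu$-null, and the subadditivity argument runs with the Borel sets $E_k'$.

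An alternative that avoids universal measurability is to replace $E_k$ by a Borel set $B_k\subseteq E_k$ that captures almost all the measure. Take $B_k$ to be a Borel subset of the co-analytic $E_k$ with full inner measure, which exists because co-analytic sets are universally measurable. I would opt for the universal measurability route and cite the standard fact that analytic (Suslin) sets are universally measurable, e.g. \citet{AliprantisBorder2003}.
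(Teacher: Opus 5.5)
Your proof is correct and takes the paper's route: a countable base, the co-analytic (hence universally measurable) sections $\{s\mid \{s\}\times O_k\subseteq F\}$, replacing each section by a Borel set that differs from it by a $\mu_S$-null set, and countable subadditivity. The only difference is the order of steps. You trim the sections to Borel subsets before applying subadditivity. The paper instead extends $\mu$ to the universally measurable sets, finds a rectangle of positive measure first, and then removes the null part.
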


\begin{proof}Let $\mathcal{B}$ be a countable base for the topology on $A$. For each $O\in\mathcal{B}$, let \[F_O=\big\{s\in S\mid \{s\}\times O\subseteq F\big\}=\pi_S\Big(S\times O\cap F^C\Big)^C,\]
with $\pi_S:S\times A\to S$ being the projection onto the first coordinate.
By construction, $F_O\times O\subseteq F$, $F=\bigcup_{O\in\mathcal{B}}F_O\times O$, and $F_O$ is universally measurable as a co-analytic set;  \citet[Theorem 12.24 and Theorem 12.41]{AliprantisBorder2003}. The product $\sigma$-algebra obtained from the universally measurable sets in $S$ and the Borel sets of $A$ is coarser than the $\sigma$-algebra of universally measurable subsets of $S\times A$, and $\mu$ has a natural extension to the latter. So $\mu(F_O\times O)$ is well-defined. Since $F=\bigcup_{O\in\mathcal{B}}F_O\times O$ and $\mu(F)>0$, the countable subadditivity of $\mu$ implies that $\mu(F_O\times O)>0$ for some $O\in\mathcal{B}$. There must also exist a Borel set $E$ and a null set $N\subseteq S$ under the $S$-marginal of $\mu$ such that $F_O=E\cup N$. Then $E\times O$ gives us the desired set.
\end{proof}

\begin{proof}[Proof of Lemma \ref{fullsuppinfcompl}]
We can view a set $F$ that is strategically relevant to $i$ at $\tau$ as a subset $F_1$ of
\[A_0\times \prod_{t=1}^\tau S_t\times\prod_{t=1}^{\tau-1}A_t.\] 
Fix some probability distribution over plays induced by a strategy profiles under which $F_1$ is reached with positive probability. By Lemma \ref{productify}, the fact that each open set is a countable union of open rectangles, and the countably subadditivity of measures, there exists a measurable set $S^1\subseteq\prod_{t=0}^\tau S_t$ and and open rectangle $O_1^1\times\cdots\times O_{\tau-1}^1\subseteq\prod_{t=1}^{\tau-1}A_t$ such that $S^1\times O_1^1\times\cdots\times O_{\tau-1}^1\subseteq F_1$ and such that the set is reached with positive probability too. Now, the set
\[F_2=\bigg\{(s,a)\in A_0\times \prod_{t=1}^\tau S_t\times\prod_{t=1}^{\tau-2}A_t~\bigg\vert~\mathcal{A}_{\tau-1}(s,a)\cap O^1_{\tau-1}\neq\emptyset\bigg\}\]
must be reached with positive probability. It is measurable because $\mathcal{A}_{\tau-1}$ is measurable. Also, because $\mathcal{A}_{\tau-1}(s,\cdot)$ is lower hemicontinuous for each $s$, the set $F_2$ has the property that for each $(s,a)\in F_2$, there exists an open neighborhood $U$ of $a$ such that $(s,a')\in F_2$ for all $a'\in U$. Using again Lemma \ref{productify}, the fact that each open set is a countable union of open rectangles, and the countably subadditivity of measures, there exists a measurable set $S^2\subseteq A_0\times \prod_{t=1}^{\tau} S_t$ and and open rectangle $O_1^2\times\cdots\times O_{\tau-2}^2\subseteq\prod_{t=1}^{\tau-2}A_t$ such that $S^2\times O_1^2\times\cdots\times O_{\tau-2}^2\subseteq F_2$ and such that the set is reached with positive probability. Continuing this way, we recursively construct and define $S^{n-1}$, $O_1^{n-1},\cdots, O_{\tau-n+1}^{n-1}$ and
\[F_n=\bigg\{(s,a)\in A_0\times \prod_{t=1}^\tau S_t\times\prod_{t=1}^{\tau-n}A_t~\bigg\vert~\mathcal{A}_{\tau-n+1}(s,a)\cap O^{n-1}_{\tau-n+1}\neq\emptyset\bigg\}\]
such that $F_n$ is reached with positive probability. Then
\[S^\tau\times\prod_{n=1}^{\tau-1}O_{\tau-n}^n\]
is reached with positive probability under some strategy. Also, by construction, it will be reached with positive probability under each full support strategy profile.
\end{proof}

\begin{proof}[Proof of Lemma \ref{Mink}] If some $C_n$ is empty, there is nothing to show. So we can exclude that case. By Tychonoff's theorem, \citet[Theorem 2.61]{AliprantisBorder2003}, the nonempty set $\prod_n C_n$ is compact, and it is clearly convex in the natural linear space. It suffices to show that the function from $S:\prod_n C_n\to K$ given by $S(x_1,x_2,\ldots)=\sum_n\alpha_n x_n$ is well-defined and continuous. $S$ is then obviously also linear, so the image is nonempty, compact, and convex. 
We first show that $S(x_1,x_2,\ldots)$ is well defined. Indeed, let $K_N=\sum_{n=1}^N\alpha_n x_n+(1-\sum_{n=1}^N\alpha_n)K$. Then $\langle K_N\rangle$ is a nested sequence of compact sets with the finite intersection property, and the intersection contains, therefore, a point $x$. Let $x'$ be a point in this intersection, let $U$ be a neighborhood of the origin, and let $U'$ be a symmetric neighborhood of the origin such that $U'+U'+U'\subseteq U$. Since $K$ is bounded as a compact set, we have $(1-\sum_{n=1}^N\alpha_n)K\subseteq U'$ for $N$ large enough and, thus, $x-x'\in U'+U'\subseteq U$. Since $U$ was arbitrary, the Hausdorff property implies $x=x'$. So $S(x_1,x_2,\ldots)=x$ and $S$ is a well-defined mapping. For all $(y_1,y_2,\ldots)$ in a suitable neighborhood of $(x_1,x_2,\ldots)$, we have $\sum_{n=1}^N \alpha_n x_n-\sum_{n=1}^N \alpha_n y_n\in U'$ by the continuity of the vector operations. Therefore,  $\sum_n\alpha_n x_n-\sum_n\alpha_n y_n\in U'+U'+U'\subseteq U$. So $S$ is continuous. 
\end{proof}

\pdfbookmark{References}{References}
\small

\end{document}